\newcommand{\ifshort}[2]{\ifthenelse{\boolean{noapp}}{#1}{#2}}
\newcommand{\ifarxiv}[2]{\ifthenelse{\boolean{arxv}}{#1}{#2}}
\let\citep\cite
\RenewDocumentCommand\sidenotetext{oo+m}{% Required to adjust sidenote fontsize. This is achieved by "scriptsize" below. May be too small in other contexts; use "footnotesize" in that case.
	\IfNoValueOrEmptyTF{#1}{%
		\@sidenotes@placemarginal{#2}{\textsuperscript{\{\thesidenote\}}{}~\scriptsize{#3}}%
		\refstepcounter{sidenote}%
	}{%
		\@sidenotes@placemarginal{#2}{\textsuperscript{#1}~#3}%
	}%
}
\RenewDocumentCommand\sidenotemark{o}{% Required to insert {} around the sidenote mark, to distinguish it from footnotes
	\@sidenotes@multichecker%
	\IfNoValueOrEmptyTF{#1}{%
		\@sidenotes@thesidenotemark{\{\thesidenote\}}%
	}{%
		\@sidenotes@thesidenotemark{#1}%
	}%
	\@sidenotes@multimarker%
}
\newcommand{\sm}{\setminus}
\newcommand{\T}{\mathsf{T}}
\newcommand{\given}{\,\vert\,}
\newcommand{\suff}{\succcurlyeq}
\newcommand{\ffus}{\preccurlyeq}
\newcommand{\ui}[2]{UI(M : {#1} \!\setminus\! {#2})}
\newcommand{\ri}[2]{RI(M : {#1} ; {#2})}
\newcommand{\si}[2]{SI(M : {#1} ; {#2})}
\newcommand{\df}[2]{\delta(M:{#1} \!\setminus\! {#2})}
\newcommand{\dfg}[2]{\delta_{G}(M: {#1} \!\setminus\! {#2})}
\newcommand{\dfh}[2]{\widehat{\delta}_G(M : {#1} \!\setminus\! {#2})}
\newcommand{\sigx}{\Sigma_{X|M}}
\newcommand{\sigy}{\Sigma_{Y|M}}
\newcommand{\sigm}{\Sigma_{M}}
\newcommand{\sigt}{\Sigma_{T}}
\newcommand{\sigth}{\widehat{\Sigma}_{T}}
\newcommand{\hx}{H_{X}}
\newcommand{\hy}{H_{Y}}
\newcommand{\sampm}{\mathsf{M}}  % Sample spaces for M, X and Y
\newcommand{\sampx}{\mathsf{X}}
\newcommand{\sampy}{\mathsf{Y}}
\newcommand{\xor}{\oplus}
\DeclareMathOperator{\Tr}{Tr}
\newcommand\indept{\protect\mathpalette{\protect\independenT}{\perp}}
\def\independenT#1#2{\mathrel{\rlap{$#1#2$}\mkern2mu{#1#2}}}
\def\nsuff{\mathrel{\mathpalette\c@ncel\suff}}
\def\c@ncel#1#2{\m@th\ooalign{$\hfil#1\mkern1mu/\hfil$\crcr$#1#2$}}
\def\thm@space@setup{\thm@preskip=\parskip
\thm@postskip=0pt}
\newtheoremstyle{compact}%
{}            % Above skip
{}            % Below skip
{\itshape}    % Body font e.g.\mdseries,\bfseries,\scshape,\itshape
{\parindent}  % Indent
{\bfseries}   % Head font e.g.\bfseries,\scshape,\itshape
{.}           % Punctuation afer theorem header
{2mm}         % Space after theorem header
{}            % Heading
\newtheoremstyle{compactnonitalic}%
{}            % Above skip
{}            % Below skip
{}            % Body font e.g.\mdseries,\bfseries,\scshape,\itshape
{\parindent}  % Indent
{\bfseries}   % Head font e.g.\bfseries,\scshape,\itshape
{.}           % Punctuation afer theorem header
{2mm}         % Space after theorem header
{}            % Heading
\theoremstyle{compact}
\newtheorem{theorem}{Theorem}
\newtheorem{lemma}[theorem]{Lemma}   % Shares numbering with theorem
\newtheorem{proposition}[theorem]{Proposition}
\newtheorem{corollary}[theorem]{Corollary}
\newtheorem{definition}{Definition}
\theoremstyle{compactnonitalic}
\newtheorem{remark}{Remark}
\newlength{\origabovedisplayskip}
\newlength{\origbelowdisplayskip}
\newlength{\origabovedisplayshortskip}
\newlength{\origbelowdisplayshortskip}
\def\expandafter\normalsize\expandafter{%
    \normalsize
    \setlength\abovedisplayskip{3pt}
    \setlength\belowdisplayskip{3pt}
    \setlength\abovedisplayshortskip{3pt}
    \setlength\belowdisplayshortskip{3pt}
}
\newcommand\blfootnote[1]{%
  \begingroup
  \renewcommand\thefootnote{}\footnote{#1}%
  \addtocounter{footnote}{-1}%
  \endgroup
}
\title{Partial Information Decomposition via Deficiency\\for Multivariate Gaussians%
	\vspace{-1mm}%
}
\author{%
    \IEEEauthorblockN{Praveen Venkatesh*}
    \IEEEauthorblockA{%
	\href{mailto:praveen.venkatesh@alleninstitute.org}{\texttt{praveen.venkatesh@alleninstitute.org}} \\
	\textit{Allen Institute, Seattle, WA} \\
	\textit{University of Washington, Seattle, WA}%
	}
    \vspace{-8mm}
    \and
    \IEEEauthorblockN{Gabriel Schamberg*}
    \IEEEauthorblockA{%
	\href{mailto:gabes@mit.edu}{\texttt{gabes@mit.edu}} \\
	\textit{Picower Institute for Learning and Memory} \\
    \textit{Massachusetts Institute of Technology, Cambridge, MA}%
	}
    \vspace{-8mm}
}
\begin{document}

\ifarxiv{
\pagestyle{plain}
\thispagestyle{plain}
}{}

\maketitle

\begin{abstract}

	Bivariate partial information decompositions (PIDs) characterize how the information in a ``message'' random variable is decomposed between two ``constituent'' random variables in terms of \emph{unique}, \emph{redundant} and \emph{synergistic} information components.
	These components are a function of the joint distribution of the three variables, and are typically defined using an optimization over the space of all possible joint distributions.
	This makes it computationally challenging to compute PIDs in practice and restricts their use to low-dimensional random vectors.
	To ease this burden%
	, we consider the case of jointly Gaussian random vectors in this paper.
	This case was previously examined by Barrett~\cite{barrett2015exploration}, who showed that certain operationally well-motivated PIDs reduce to a \emph{closed form} expression for \emph{scalar messages}.
	Here, we show that Barrett's result does not extend to vector messages in general, and characterize the set of multivariate Gaussian distributions that reduce to closed-form.
	Then, for all \emph{other} multivariate Gaussian distributions, we propose a convex optimization framework for approximately computing a specific PID definition based on the statistical concept of \emph{deficiency}.
	Using simplifying assumptions specific to the Gaussian case, we provide an efficient algorithm to approximately compute the bivariate PID for multivariate Gaussian variables with tens or even hundreds of dimensions.
	We also theoretically and empirically justify the goodness of this approximation.
\end{abstract}

\section{Introduction}

Partial information decompositions (PIDs) provide a framework for characterizing the joint information content of three or more random variables.
The three-variable case is usually discussed in terms of how \emph{two} random variables, $X$ and $Y$, contribute to the information about a ``message'' variable, $M$, and is hence referred to as the \emph{bivariate} decomposition.
More precisely, a bivariate PID decomposes the mutual information $I\bigl(M; (X, Y)\bigr)$ into four additive components: the information about $M$ that is (i) unique to $X$; (ii) unique to $Y$; (iii) redundant in both $X$ and $Y$; and (iv) synergistic, i.e., cannot be obtained from $X$ or $Y$ \emph{individually}, but is present in their combination $(X, Y)$.
For an intuitive example, suppose $M = [M_1, M_2, M_3]$, $X = [M_1, M_2, M_3 \xor Z]$, and $Y = [M_2, Z]$, where $M_1, M_2, M_3, Z \sim \text{i.i.d.~Ber}(1/2)$ and $\xor$ represents bitwise-\textsc{xor}.
Then, $I\bigl(M ; (X, Y)\bigr)$ is 3~bits, and can be decomposed as follows: $X$ has 1~bit of unique information about $M$ (i.e., $M_1$) that is not contained in $Y$, while $Y$ has zero unique information.
$X$ and $Y$ have 1~bit of redundant information, i.e., $M_2$, which can be recovered from \emph{either} $X$ or $Y$.
Finally, $X$ and $Y$ have 1~bit of synergistic information, i.e., $M_3$, which is \emph{not} available in either $X$ or $Y$ \emph{individually}, but can only be decoded when $X$ and $Y$ are taken \emph{together}.
Multiple competing definitions have been proposed for unique, redundant and synergistic information~\cite{williams2010nonnegative, bertschinger2014quantifying, harder2013bivariate, banerjee2018unique, niu2019measure, finn2018pointwise, kay2018exact}, however their differences are only poorly understood (see~\cite{lizier2018information} for a review).
\blfootnote{*Equal contribution. The full version of this paper, including appendices, is available online~\cite{full_version}. Part of this work was done while P.~Venkatesh was a Ph.D.\ student at Carnegie Mellon University, Pittsburgh, PA.}

With the development of these formal definitions, PIDs have begun to find uses in a wide variety of contexts.
Many works, including ours, have used the PID to analyze biological and neural systems~\cite{colenbier2020disambiguating, boonstra2019information, krohova2019multiscale, timme2018tutorial, pica2017quantifying, gat1999synergy, schneidman2003synergy, brenner2000synergy, venkatesh2020information}, e.g., to understand how neural activity and responses jointly encode stimuli.
PID has also been used to characterize synergistic interactions in financial markets~\cite{scagliarini2020synergistic}, and to define new measures of bias in machine learning~\cite{dutta2020information}.
Unfortunately, we find that only a few of these definitions have clear operational interpretations, and \emph{these} definitions involve an optimization over the space of all joint distributions over $(M, X, Y)$, making them hard to compute in practice.
The absence of efficient methods to compute these PIDs, particularly for high dimensions, has been a bottleneck in their wider adoption.

Towards finding efficient ways of computing PIDs, this paper focuses on the bivariate PID for jointly Gaussian random vectors $M$, $X$ and $Y$.
Barrett~\cite{barrett2015exploration} previously examined this problem for the case of \emph{scalar messages} $M$, and showed that certain operationally well-motivated PID definitions can be reduced to a PID that has a \emph{closed form} expression in the Gaussian case (Section~\ref{sec_background}).
This has enabled the efficient computation of these operationally sound PIDs for Gaussians with scalar $M$.
However, the case of \emph{vector} $M$ has remained unresolved in the PID literature.

This paper has two main contributions concerning the bivariate PID for fully multivariate Gaussian random variables:%
\footnote{We use the term ``fully \emph{multivariate} Gaussian'' here to refer to instances of a \emph{bivariate} PID where $M$ may also be a \emph{vector} (in contrast to Barrett's work).
This is not to be confused with \emph{multivariate PIDs}, where information about $M$ is decomposed among more than two variables.}
\begin{enumerate*}[label=(\arabic*)]
	\item In Section~\ref{sec_mult_gaussian}, we show that for \emph{vector} $M$, Barrett's result (and hence the reduction to closed-form) does not apply in general.
		We introduce a key concept underpinning Barrett's result called \emph{Blackwell sufficiency}~\cite{blackwell1953equivalent}, using which we characterize the condition under which Barrett's result extends to fully multivariate Gaussians.
	\item %
		For cases where this condition \emph{does not} hold, we provide a convex optimization framework to efficiently approximate a PID definition that is based on the statistical concept of \emph{deficiency}~\cite{banerjee2018computing} (Section~\ref{sec_def_approx}).
		We also provide results showing how this approximation bounds the true Gaussian PID and specific cases when it is guaranteed to agree.
		In Section~\ref{sec_simulations}, we present numerical experiments that validate and go beyond our analytical results.
\end{enumerate*}
Finally, we conclude with a discussion of open questions in Section~\ref{sec_discussion}.

\section{Introduction to PID and Barrett's Result}
\label{sec_background}

Let $M$, $X$ and $Y$ be random variables with sample spaces $\sampm$, $\sampx$ and $\sampy$ respectively, and joint density $P_{MXY}$.
Then, a bivariate Partial Information Decomposition is loosely defined as a set of non-negative functions $UI(M{\,:\,}X{\,\sm\,}Y)$, $UI(M{\,:\,}Y{\,\sm\,}X)$, $RI(M{\,:\,}X;Y)$ and $SI(M{\,:\,}X;Y)$ that satisfy:
\begin{align} \label{eq_pid}
	I\bigl(M; (X,Y)\bigr) &= \ui{X}{Y} + \ui{Y}{X} \\
						  &\hphantom{= \vphantom{UI}}+ \ri{X}{Y} + \si{X}{Y}, \notag \\
    I(M; X) &= \ui{X}{Y} + \ri{X}{Y} \label{eq_pid_x}, \\
    I(M; Y) &= \ui{Y}{X} + \ri{X}{Y} \label{eq_pid_y}.
\end{align}
Here, $UI(M{\,:\;}X{\,\sm\,}Y)$ represents the information about $M$ uniquely%
\footnote{Note that the $X \setminus Y$ in the argument of $UI$ is purely notational, and does not represent a set difference. All quantities in~\eqref{eq_pid} are functions of $P_{MXY}$.}
present in $X$ and not in $Y$, while $RI(M{\,:\;}X;Y)$ and $SI(M{\,:\;}X ; Y)$ represent the redundant and synergistic information about $M$ contained between $X$ and $Y$.
Usually, we also require that $RI$ and $SI$ are symmetric in $X$ and $Y$ (e.g., see axioms in~\citep{williams2010nonnegative, bertschinger2014quantifying}).

For brevity, we may also write the terms in the RHS of~\eqref{eq_pid} as $UI_X$, $UI_Y$, $RI$, and $SI$ respectively.
Since we have three equations and four unknowns, defining any one these four unknowns suffices to determine the other three.

We now state two PID definitions (which are non-identical, in general) in order to present the main result of Barrett~\cite{barrett2015exploration}.
Their interpretations are given later in Section~\ref{sec_mult_gaussian}.
\begin{definition}[MMI-PID \cite{barrett2015exploration}] \label{def_mmi_pid}
	Let the redundant information about $M$ present in both $X$ and $Y$ be the \textbf{M}inimum of their respective \textbf{M}utual \textbf{I}nformations with $M$ (hence ``MMI''):
	\begin{equation} \label{eq_mmi_pid}
		RI_{\textup{MMI}}(M : X ; Y) \coloneqq \min\{I(M ; X),I(M ; Y)\}.
	\end{equation}
\end{definition}
\noindent Along with equations \eqref{eq_pid}--\eqref{eq_pid_y}, this fully determines the MMI-PID, i.e., $UI_{\text{MMI}}(M {:\,} X {\setminus} Y)$, $UI_{\text{MMI}}(M {:\,} Y {\setminus} X)$, and $SI_{\text{MMI}}(M {:\,} X ; Y)$ are now well-defined.%
\begin{remark} \label{rem_gauss_mmi_pid}
	When $P_{MXY}$ is jointly Gaussian, the MMI-PID can be written in closed form using the following well-known identity (derived in Appendix~\ref{app_misc_derivs} for completeness\ifarxiv{}{~\cite{full_version}}):
	\begin{equation} \label{eq_gauss_mmi_pid}
		I(M ; Z) = \frac{1}{2} \log\biggl(\frac{\mathrm{det}(\Sigma_M)}{\mathrm{det}(\Sigma_M - \Sigma_{MZ}^{}\Sigma_Z^{-1}\Sigma_{MZ}^\T)}\biggr),
	\end{equation}
	where $Z$ could be $X$, $Y$, or their concatenation $[X^\T, Y^\T]^\T$, $\Sigma_M$ and $\Sigma_Z$ are the covariance matrices of $M$ and $Z$, and $\Sigma_{MZ}$ is their cross-covariance matrix.
	The closed-form expression follows by using \eqref{eq_gauss_mmi_pid} in \eqref{eq_pid}--\eqref{eq_mmi_pid}.
\end{remark}
\begin{definition}[$\sim$-PID\footnote{This PID is also sometimes referred to as the BROJA PID in the literature.} \cite{bertschinger2014quantifying}] \label{def_tilde_pid}
	The unique information about $M$ present in $X$ and not in $Y$ is given by
	\begin{equation}
		\widetilde{UI}(M : X \setminus Y) \coloneqq \min_{Q \in \Delta_P} I_Q(M ; X \given Y),
	\end{equation}
	where $\Delta_P \coloneqq \{Q_{MXY}: Q_{MX} = P_{MX},\; Q_{MY} = P_{MY}\}$ and $I_Q$ is the conditional mutual information over the joint distribution $Q_{MXY}$.
\end{definition}
\begin{theorem}[Barrett~\cite{barrett2015exploration}] \label{thm_barrett}
	If $M$, $X$ and $Y$ are jointly Gaussian and $M$ is scalar, then the $\sim$-PID of Definition~\ref{def_tilde_pid} reduces to the (simpler) MMI-PID of Definition~\ref{def_mmi_pid}.%
	\footnote{Barrett's original theorem statement is actually slightly more general: all PIDs satisfying assumption $(*)$ of Bertschinger et al.~\cite{bertschinger2014quantifying} reduce to the MMI-PID for jointly Gaussian $P_{MXY}$ and scalar $M$. This includes a few other well-known definitions such as \cite{williams2010nonnegative} and \cite{harder2013bivariate}.}
\end{theorem}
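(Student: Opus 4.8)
The plan is to convert the theorem into a single statement about \emph{unique information} and then into a statement about \emph{channel ordering}. Since \eqref{eq_pid}--\eqref{eq_pid_y} are three equations in four unknowns, a PID is pinned down by any one of its four components, so it suffices to show that the $\sim$-PID and the MMI-PID agree on redundancy. From \eqref{eq_pid_x}--\eqref{eq_pid_y} and non-negativity of the unique informations, \emph{every} PID obeys $RI(M:X;Y)\le\min\{I(M;X),I(M;Y)\}$; hence it is enough to prove the reverse inequality for the $\sim$-PID. Assuming without loss of generality that $I(M;X)\le I(M;Y)$, the reverse inequality is, by \eqref{eq_pid_x}, exactly the claim $\widetilde{UI}(M:X\setminus Y)=0$.

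Next I would recall when this unique information vanishes. By Definition~\ref{def_tilde_pid}, $\widetilde{UI}(M:X\setminus Y)=0$ whenever some $Q\in\Delta_P$ makes $M-Y-X$ a Markov chain, since then $I_Q(M;X\given Y)=0$. Such a $Q$ exists whenever there is a Markov kernel $\lambda$ from $\sampy$ to $\sampx$ with $P_{X\given M=m}=\int\lambda(\cdot\given y)\,P_{Y\given M=m}(dy)$ for every $m$ --- that is, whenever the channel $M\to Y$ is \emph{Blackwell sufficient} for the channel $M\to X$~\cite{bertschinger2014quantifying} --- since one may then take $Q$ with $Q_{MY}=P_{MY}$ and $Q_{X\given MY}(\cdot\given m,y)=\lambda(\cdot\given y)$, which lies in $\Delta_P$ and makes $M-Y-X$ Markov (so the infimum defining $\widetilde{UI}$ is attained at $0$). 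It therefore suffices to prove: for \emph{scalar} Gaussian $M$, the Gaussian channel with the larger mutual information is Blackwell sufficient for the one with the smaller.

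For this core claim I would exploit that $M$ is one-dimensional. Writing the jointly Gaussian channel as $X=\v{b}_X M+\v{N}_X$ with $\v{N}_X\sim\mathcal{N}(0,\Sigma_{X|M})$ independent of $M$, the family $\{P_{X\given M=m}\}_m$ is a one-parameter exponential family, so $p(x\given m)$ depends on $m$ only through the \emph{scalar} statistic $S_X\coloneqq\v{b}_X^\T\Sigma_{X|M}^{-1}X$; one then checks that $S_X$ is sufficient, that $M-S_X-X$ is Markov, and that $S_X\given M=m$ is Gaussian. Consequently the channel $M\to X$ is equivalent \emph{in the Blackwell order} --- not merely in mutual information --- to a scalar additive-Gaussian-noise channel $M\to M+W_X$ with $W_X\sim\mathcal{N}(0,\sigma_X^2)$, for which $I(M;X)=\tfrac12\log(1+\Sigma_M/\sigma_X^2)$ (consistent with \eqref{eq_gauss_mmi_pid}) is strictly decreasing in $\sigma_X^2$. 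Carrying out the same reduction for $Y$, the hypothesis $I(M;X)\le I(M;Y)$ becomes $\sigma_X^2\ge\sigma_Y^2$; and a scalar Gaussian-noise channel of noise variance $\sigma_X^2$ is obtained from one of noise variance $\sigma_Y^2$ by adding an independent $\mathcal{N}(0,\sigma_X^2-\sigma_Y^2)$ term, which is a genuine garbling. Composing the equivalence $X\leftrightarrow S_X$, this garbling, and the equivalence $S_Y\leftrightarrow Y$ yields the kernel $\lambda$, giving $\widetilde{UI}(M:X\setminus Y)=0$ and hence the theorem.

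The step I expect to be the main obstacle is the phrase ``equivalent in the Blackwell order'' above: sufficiency of $S_X$ lets one \emph{recover $M$} as well from $S_X$ as from $X$, but building an honest kernel that lands in $\sampx$ (rather than in the one-dimensional statistic space) requires also being able to \emph{regenerate} a sample of $X$ with the correct conditional law from $S_X$ alone, which is precisely the Markov property $M-S_X-X$ together with joint Gaussianity of $(X,S_X)$ --- and composing the three kernels in the correct order while tracking conditional laws is where the care is needed. A secondary, more technical point is that the equivalence ``$\widetilde{UI}=0\iff$ Blackwell sufficiency'' is customarily stated for finite alphabets; the explicit construction of $Q$ above sidesteps the need for a general continuous-space version.
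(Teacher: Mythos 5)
Your proposal is correct, and its skeleton is the same one the paper uses to explain Barrett's theorem (steps (A)--(C) of Section~\ref{sec_mult_gaussian}): reduce the claim to showing that one unique information vanishes, translate that into Blackwell sufficiency of one channel for the other, and then verify that for scalar $M$ one of the two Gaussian channels always dominates. Your handling of (A) via the universal bound $RI \le \min\{I(M;X),I(M;Y)\}$ and of (B) via the explicit construction of $Q\in\Delta_P$ with $M$--$Y$--$X$ Markov is exactly the intended content, and the explicit $Q$ does indeed sidestep the finite-alphabet caveat since only the direction ``sufficiency $\Rightarrow \widetilde{UI}=0$'' is needed. Where you genuinely diverge is in step (C): the paper obtains it as the scalar-$M$ specialization of Theorem~\ref{thm_gerdes}, whose proof routes through the equivalence of Blackwell sufficiency with stochastic degradedness (Lemma~\ref{lem_bs_sd_equiv}) and then imports the MIMO broadcast-channel results of Shang--Poor and Gerdes et al.\ (for scalar $M$ the matrices $\hx^\T\sigx^{-1}\hx$ and $\hy^\T\sigy^{-1}\hy$ are scalars, hence always comparable). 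You instead prove the scalar case directly by collapsing each vector observation to its scalar sufficient statistic $S_X=\hx^\T\sigx^{-1}X$, using sufficiency to get Blackwell equivalence of $X$ with $S_X$, and realizing the noisier scalar AWGN channel as a physical degradation of the less noisy one. Your route is more elementary and self-contained (no external degradedness lemmas), at the cost of being tied to $d_M=1$; the paper's route is heavier but yields the full matrix criterion \eqref{eq_gauss_bs_cond} that drives the rest of the paper. One small point of care: when $\hx^\T\sigx^{-1}\hx=0$ the normalization of $S_X$ degenerates, but then $X\indept M$ and Blackwell dominance is trivial, so the argument survives.
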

\begin{remark}
	Theorem~\ref{thm_barrett} is significant because the $\sim$-PID is operationally well-motivated (as we will see), while the MMI-PID can be expressed in closed form for jointly Gaussian $P_{MXY}$ as shown in Remark~\ref{rem_gauss_mmi_pid}.
	This equivalence has therefore inspired usage of the MMI-PID in several studies that apply PIDs to real-world data~\cite{scagliarini2020synergistic, colenbier2020disambiguating, boonstra2019information, krohova2019multiscale}.
\end{remark}

\section{Main Results}

First we show that Barrett's result (Theorem~\ref{thm_barrett}) does not extend to vector $M$ in general, so the closed-form expression from Remark~\ref{rem_gauss_mmi_pid} may not always apply.
\vspace{-6pt}
\begin{proof}[\indent Counterexample]
	Suppose $M = [M_1, M_2]$, $X = M_1 + Z_1$ and $Y = M_2 + Z_2$, with $M_1$, $M_2$, $Z_1$, $Z_2 \sim \text{ i.i.d.\ } \mathcal N(0, 1)$.
	Based on the intuition from the introduction, $X$ and $Y$ have equal amounts of \emph{unique} information about $M$, and one can show that this holds for the $\sim$-PID.
	However, since $I(M ; X) = I(M ; Y)$, $X$ and $Y$ have only \emph{redundant} information under the MMI-PID, and no unique information.
	Thus, the $\sim$-PID does not always reduce to the MMI-PID for vector $M$.
\end{proof}
\vspace{-6pt}
This counterexample is shown more formally in Appendix~\ref{app_misc_derivs}.

\subsection{When do Multivariate Gaussians have Closed-form PIDs?}
\label{sec_mult_gaussian}

To answer this question, we make some novel and important observations about Barrett's result by breaking it down into a few distinct steps:
\begin{enumerate}[label=\Alph*), leftmargin=*, nosep]
	\item Given \emph{any} PID definition that satisfies the basic equations \eqref{eq_pid}--\eqref{eq_pid_y}, if $P_{MXY}$ is such that either $UI_X = 0$ or $UI_Y = 0$, then that PID reduces to the MMI-PID.
	\item Barrett considers a few specific PID definitions (including the $\sim$-PID mentioned above).
		In fact, we can consider all PIDs that satisfy the following property: $UI_Y = 0$ if and only if $X$ is \emph{Blackwell sufficient} for $M$ with respect to $Y$ (formally defined below).
		We call such PIDs \emph{Blackwellian}.%
		\footnote{Barrett actually considers all PIDs that satisfy Assumption $(*)$ of \citep{bertschinger2014quantifying}. \label{fn_assm_star}
		While this is not incorrect, it is \emph{incidental}, and considering Blackwellian PIDs is more precise.
		This difference is explained in Appendix~\ref{app_thm_gerdes}.}
	\item When $P_{MXY}$ is jointly Gaussian and $M$ is scalar, we always have that either $X$ is Blackwell sufficient for $M$ w.r.t.\ $Y$, or $Y$ is Blackwell sufficient w.r.t.\ $X$.
		However, this is not always the case when $M$ is a vector.
\end{enumerate}
Theorem~\ref{thm_barrett} follows by reading these steps in reverse order.
Thus, we observe that the concept of Blackwell sufficiency is the driving force in Barrett's result, although this is not explicitly recognized in that paper~\cite{barrett2015exploration}.
So we now proceed to formally define and understand Blackwell sufficiency.

\vspace{3pt}
\textit{Blackwell sufficiency. }
The idea of Blackwell sufficiency comes from the field of statistical decision theory~\cite{blackwell1953equivalent, torgersen1991comparison}.
Suppose we are allowed to make inferences about $M$ by observing either $X$ or $Y$, but not both.
Then, informally speaking, $X$ is said to be \emph{Blackwell sufficient} for $M$ with respect to $Y$ if, on average, we can make equal or better inferences about $M$ (w.r.t. \emph{any} loss function) by observing $X$ rather than $Y$.
This forms a useful operational interpretation for Blackwellian PIDs, which satisfy $UI_Y = 0$ if and only if $X$ is Blackwell sufficient for $M$ with respect to $Y$.
Thus, $UI_Y = 0$ if and only if $Y$ can provide no better information than $X$ for making inferences about $M$, which provides a concrete intuition for the meaning of \emph{unique} information~\cite{bertschinger2014quantifying}.
For simplicity, we define an equivalent~\cite{blackwell1953equivalent} notion of Blackwell sufficiency, which is more amenable to our setup here.
Given random variables $M$, $X$ and $Y$ with joint density $P_{MXY}$, Blackwell sufficiency is formally defined for the stochastic transformations $P_{X|M}$ and $P_{Y|M}$ (also called ``channels'' from $M$ to $X$ and from $M$ to $Y$ respectively):%
\footnote{Note that Blackwell sufficiency ignores potential dependencies between $X$ and $Y$ given $M$, i.e., it does not depend upon $P_{XY|M}$ in its entirety; it only depends on the $X$- and $Y$-marginals of $P_{XY|M}$.}
\begin{definition}[Blackwell sufficiency: $\suff_M$] \label{def_blackwell}
	We say that a channel $P_{X|M}$ is \emph{Blackwell sufficient} for another channel $P_{Y|M}$ (denoted $X \suff_M Y$) if $\exists\; P_{Y'|X} \in \mathcal C(\mathsf{Y} \given \mathsf{X})$ such that
	\begin{equation} \label{eq_blackwell_suff}
		P_{Y'|X}\circ P_{X|M} \;=\; P_{Y|M},
	\end{equation}
	where $\mathcal C(\sampy|\sampx)$ is the set of all channels from $\sampx$ to $\sampy$, and $\circ$ represents channel composition, i.e.\ $\forall\; m \in \sampm, y \in \sampy$,
	\begin{equation}
		(P_{Y'|X}\circ P_{X|M})(y|m) \coloneqq \int P_{Y'|X}(y|x) P_{X|M}(x|m)\,dx.
	\end{equation}
\end{definition}
Intuitively, $X \suff_M Y$ means that we can generate a new random variable $Y'$ from $X$ (using the stochastic transformation $P_{Y'|X}$), so that the effective channel from $M$ to $Y'$ is equivalent to the original channel from $M$ to $Y$ (i.e., for any $M=m$, samples of $Y'$ are statistically identical to samples of $Y$).
\begin{definition}[Blackwellian PID] \label{prop_blackwell}
	We say that a bivariate PID of $I\bigl(M ; (X, Y)\bigr)$ is \emph{Blackwellian} if $Y$ has zero unique information about $M$ with respect to $X$ if and only if $X \suff_M Y$.
\end{definition}

Recalling our main objective: we wish to find efficient ways of computing the fully multivariate Gaussian PID.
Based on step (C) in the list above, our first goal is to characterize the conditions under which $X \suff_M Y$ for fully multivariate Gaussian $P_{MXY}$.
Then, steps (B) and (A) would follow, ending with the closed form expression of the MMI-PID.

\vspace{3pt}
\textit{Characterizing Blackwell sufficiency for Gaussians. }
First, we parameterize the multivariate Gaussian distribution $P_{MXY}$.
Let the sample spaces of $M$, $X$ and $Y$ respectively be $\sampm = \mathbb R^{d_M}$, $\sampx = \mathbb R^{d_X}$ and $\sampy = \mathbb R^{d_Y}$.
Let the joint distribution be given by $P_{MXY} = \mathcal N(0, \Sigma)$.
Since Blackwell sufficiency ignores the dependence between $X$ and $Y$ conditional on $M$, we define:
\begin{alignat}{3}
	M &\sim P_M &&= \mathcal{N}(0,\sigm) \\
	X \given M &\sim P_{X|M} &&= \mathcal{N}(\hx M, \,\sigx) \\
	Y \given M &\sim P_{Y|M} &&= \mathcal{N}(\hy M, \,\sigy)
\end{alignat}
Here, $\hx$, $\hy$, $\sigx$ and $\sigy$ can all be extracted from the full covariance matrix, $\Sigma$.
Also, note that we \emph{do not} assume that $X \indept Y \given M$; this dependence is captured in $\Sigma$.
\begin{remark} \label{rem_full_rank}
	Without loss of generality, we assume that $\mathbb E[M] = 0$.
	We also assume that the noise covariance matrices $\sigx$ and $\sigy$ are full rank, which need not always be true (e.g., in a noiseless channel).
	However, this assumption is required to keep mutual information values finite, as explained in Appendix~\ref{app_misc_derivs}.
\end{remark}
We can now characterize the conditions for Blackwell sufficiency in fully multivariate Gaussian distributions.
\begin{theorem}[Blackwell Sufficiency for Multivariate Gaussians] \label{thm_gerdes}
	For jointly Gaussian random vectors $M$, $X$ and $Y$, $X \suff_M Y$ if and only if
	\begin{equation} \label{eq_gauss_bs_cond}
		\hx^\T \sigx^{-1} \hx \suff \hy^\T \sigy^{-1} \hy,
	\end{equation}
	where $A \suff B$ means that $A - B$ is positive semidefinite (for positive semidefinite matrices $A$ and $B$).
\end{theorem}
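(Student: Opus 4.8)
The plan is to prove the two implications of this ``if and only if'' separately, built around one organizing observation: $\hx^\T\sigx^{-1}\hx$ and $\hy^\T\sigy^{-1}\hy$ are exactly the Fisher information matrices of the Gaussian location models $m\mapsto\mathcal N(\hx m,\sigx)$ and $m\mapsto\mathcal N(\hy m,\sigy)$. Equivalently (and this is the form I would actually use), if $M$ is given an auxiliary Gaussian prior $M\sim\mathcal N(0,\Sigma_0)$, then the Bayesian error covariance is $\operatorname{Cov}\!\bigl(M-\mathbb E[M\given X]\bigr)=\bigl(\Sigma_0^{-1}+\hx^\T\sigx^{-1}\hx\bigr)^{-1}$, and similarly for $Y$. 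So \eqref{eq_gauss_bs_cond} says precisely that $X$ carries weakly more Fisher information about $M$ than $Y$ does, in the positive semidefinite sense.

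\textbf{Sufficiency ($\Leftarrow$).} Assuming \eqref{eq_gauss_bs_cond}, I would exhibit an explicit channel $P_{Y'|X}$ realizing \eqref{eq_blackwell_suff}, and it is enough to search among linear--Gaussian channels $Y'\given X{=}x\sim\mathcal N(Ax,\Sigma_N)$. Composing with $P_{X|M}$ gives $Y'\given M{=}m\sim\mathcal N(A\hx m,\ A\sigx A^\T+\Sigma_N)$, so \eqref{eq_blackwell_suff} is equivalent to (i) $A\hx=\hy$ and (ii) $\Sigma_N:=\sigy-A\sigx A^\T\suff 0$. Whitening with $B:=\sigx^{-1/2}\hx$, $C:=\sigy^{-1/2}\hy$ and substituting $A=\sigy^{1/2}\widetilde A\sigx^{-1/2}$, conditions (i)--(ii) become $\widetilde AB=C$ and $\widetilde A\widetilde A^\T\ffus I$ (i.e.\ $\widetilde A$ is a contraction), while the hypothesis \eqref{eq_gauss_bs_cond} reads $B^\T B\suff C^\T C$. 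I would then verify that $\widetilde A:=C\,(B^\T B)^{+}B^\T$ (Moore--Penrose pseudoinverse) works: the hypothesis forces $\ker B\subseteq\ker C$, so $\widetilde AB=C(B^\T B)^{+}B^\T B=C$; and setting $F:=C\bigl((B^\T B)^{+}\bigr)^{1/2}$ we have $\widetilde A\widetilde A^\T=FF^\T$ with $F^\T F=\bigl((B^\T B)^{+}\bigr)^{1/2}(C^\T C)\bigl((B^\T B)^{+}\bigr)^{1/2}\ffus\bigl((B^\T B)^{+}\bigr)^{1/2}(B^\T B)\bigl((B^\T B)^{+}\bigr)^{1/2}\ffus I$, so $\|F\|\le 1$ and $\widetilde A$ is a contraction. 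This is the only step requiring real computation, but it is elementary linear algebra.

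\textbf{Necessity ($\Rightarrow$).} The delicate point is that Definition~\ref{def_blackwell} allows an \emph{arbitrary}, possibly non-Gaussian, post-processing channel $P_{Y'|X}$, so I cannot match parameters; I need a quantity that is monotone under \emph{any} data processing yet coincides with the Gaussian value on the Gaussian family. I would use monotonicity of the conditional error covariance. Put the auxiliary prior $M\sim\mathcal N(0,I_{d_M})$ --- legitimate because $X\suff_M Y$ is a property of the channels $P_{X|M},P_{Y|M}$ alone and never touches $\sigm$. Given $X\suff_M Y$, pick $P_{Y'|X}$ with $P_{Y'|M}=P_{Y|M}$ and form $P_{MXY'}=P_M P_{X|M}P_{Y'|X}$, so that $M-X-Y'$ is a Markov chain. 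Then $\mathbb E[M\given X,Y']=\mathbb E[M\given X]$, and applying the scalar minimum-mean-square-error inequality to $c^\T M$ for every $c$ gives $\operatorname{Cov}\!\bigl(M-\mathbb E[M\given Y']\bigr)\suff\operatorname{Cov}\!\bigl(M-\mathbb E[M\given X]\bigr)$ in the PSD order. Since the common prior makes $P_{MY'}=P_M P_{Y|M}=P_{MY}$ jointly Gaussian, the left side equals $\bigl(I+\hy^\T\sigy^{-1}\hy\bigr)^{-1}$ and the right side equals $\bigl(I+\hx^\T\sigx^{-1}\hx\bigr)^{-1}$; thus $\bigl(I+\hx^\T\sigx^{-1}\hx\bigr)^{-1}\ffus\bigl(I+\hy^\T\sigy^{-1}\hy\bigr)^{-1}$, and inverting (which reverses the order on positive definite matrices) yields \eqref{eq_gauss_bs_cond}. (This step is equivalently the data-processing inequality for Fisher information matrices.)

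\textbf{Main obstacle.} I expect $(\Rightarrow)$ to be the crux, because getting from ``some channel $P_{Y'|X}$ exists'' to a clean matrix inequality requires isolating the right monotone functional and checking that its value on the composed experiment $P_{Y'|X}\circ P_{X|M}=P_{Y|M}$ really is the Gaussian value --- the decision-theoretic/Markov structure has to be used, not just algebra. The other place to be careful is the rank-deficient case in $(\Leftarrow)$: when $B$ or $C$ has a nontrivial kernel and $\sigx,\sigy$ are merely full rank rather than well conditioned, which is exactly why the pseudoinverse and the $\ker B\subseteq\ker C$ observation enter.
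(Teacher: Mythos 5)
Your proof is correct, and it takes a genuinely different route from the paper's in both directions. The paper first establishes that Blackwell sufficiency coincides with stochastic degradedness (its Lemma~\ref{lem_bs_sd_equiv}) and then imports the key matrix characterization from Shang and Poor (Lemma~\ref{lem_shang_poor}): the condition \eqref{eq_gauss_bs_cond} holds iff there exists $T$ with $\hy=T\hx$ and $T\sigx T^\T\ffus\sigy$, from which the sufficiency direction is the same explicit linear--Gaussian channel you construct. You instead prove that existence statement from scratch via $\widetilde A=C(B^\T B)^{+}B^\T$ together with the observation that $B^\T B\suff C^\T C$ forces $\ker B\subseteq\ker C$; this makes the argument self-contained where the paper outsources it, and your contraction verification is sound. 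For necessity, the paper uses the mutual-information data-processing inequality and argues by contradiction, choosing a separate rank-one prior $\sigm=cc^\T$ for each direction $c$ in which \eqref{eq_gauss_bs_cond} would fail; you fix the single prior $\mathcal N(0,I)$ and use monotonicity of the Bayesian error covariance along the Markov chain $M$--$X$--$Y'$, obtaining $\bigl(I+\hy^\T\sigy^{-1}\hy\bigr)^{-1}\suff\bigl(I+\hx^\T\sigx^{-1}\hx\bigr)^{-1}$ in one step and then inverting. Both are data-processing arguments, but yours (essentially the Fisher-information DPI) delivers the full matrix inequality directly without the direction-by-direction scan or degenerate priors, while the paper's version ties more closely to the MIMO broadcast-channel literature it leans on. Your argument correctly exploits that $P_{MY'}=P_{MY}$ is jointly Gaussian even though $P_{MXY'}$ need not be, which is the one subtlety a final write-up should state explicitly.
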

A proof of this theorem is provided in Appendix~\ref{app_thm_gerdes}\ifarxiv{}{ in the full version of this paper~\citep{full_version}}.
The result relies on a connection between concepts from two disparate fields: Blackwell sufficiency and the \emph{stochastic degradedness} of broadcast channels.%
\footnote{This connection was mentioned in passing by Raginsky~\cite{raginsky2011shannon}, but we prove the equivalence formally in Lemma~\ref{lem_bs_sd_equiv} (see Appendix~\ref{app_thm_gerdes}\ifarxiv{}{in the full version~\cite{full_version}}).
This connection has also never before been recognized in the context of PID.}
We then leverage prior work from the literature analyzing the stochastic degradedness of MIMO Gaussian broadcast channels~\cite{gerdes2015equivalence, shang2012noisy} to prove our result.
With this, we can formally state our extension of Barrett's result to fully multivariate Gaussians:
\begin{corollary}[Extension of Barrett's Result] \label{cor_gauss_pid}
	If $P_{MXY}$ is jointly Gaussian and satisfies \eqref{eq_gauss_bs_cond}, then any Blackwellian PID computed on $P_{MXY}$ reduces to the MMI-PID and can be written in closed form as shown in Remark~\ref{rem_gauss_mmi_pid}.
\end{corollary}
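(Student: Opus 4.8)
The plan is to assemble the corollary by reading the decomposition of Barrett's argument (steps (A)--(C) above) in reverse, with Theorem~\ref{thm_gerdes} now discharging step (C). First I would invoke Theorem~\ref{thm_gerdes}: assuming $P_{MXY}$ is jointly Gaussian and satisfies \eqref{eq_gauss_bs_cond}, i.e.\ $\hx^\T \sigx^{-1} \hx \suff \hy^\T \sigy^{-1} \hy$, that theorem gives $X \suff_M Y$. Next, because the PID under consideration is Blackwellian (Definition~\ref{prop_blackwell}), $X \suff_M Y$ forces $\ui{Y}{X} = 0$; note that only the ``only if'' direction of the Blackwellian property is needed here.

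It then remains to carry out step (A) explicitly. With $\ui{Y}{X} = 0$, equation~\eqref{eq_pid_y} gives $\ri{X}{Y} = I(M;Y)$; substituting this into \eqref{eq_pid_x} and using $\ui{X}{Y} \ge 0$ yields $I(M;Y) = \ri{X}{Y} \le I(M;X)$, so $\ri{X}{Y} = \min\{I(M;X), I(M;Y)\}$, which is exactly the defining identity~\eqref{eq_mmi_pid} of the MMI-PID. Since the linear system \eqref{eq_pid}--\eqref{eq_pid_y} has three equations and four unknowns, fixing $RI$ determines the remaining three components, so the PID agrees with the MMI-PID in \emph{every} component, not merely in $RI$. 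Finally, Remark~\ref{rem_gauss_mmi_pid} supplies the closed form: substituting the Gaussian mutual-information identity \eqref{eq_gauss_mmi_pid} into \eqref{eq_pid}--\eqref{eq_mmi_pid} expresses the MMI-PID, and hence the given Blackwellian PID, in closed form.

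I do not expect a genuine obstacle here, since the substantive work --- the equivalence of the matrix inequality \eqref{eq_gauss_bs_cond} with Blackwell sufficiency --- is entirely inside Theorem~\ref{thm_gerdes}, which I am allowed to assume. The two points to watch are: (i) ``reduces to the MMI-PID'' should be read as equality of all four PID terms, which the linear-system argument above settles; and (ii) the matrix ordering in \eqref{eq_gauss_bs_cond} is asymmetric, so it is $X$ (not $Y$) that is Blackwell sufficient, and this is precisely what pins down which of $\ui{X}{Y}, \ui{Y}{X}$ vanishes and therefore which mutual information attains the minimum in \eqref{eq_mmi_pid}.
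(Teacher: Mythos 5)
Your proof is correct and follows exactly the route the paper intends: the paper does not write out a separate proof of Corollary~\ref{cor_gauss_pid}, but derives it by reading steps (A)--(C) of Section~\ref{sec_mult_gaussian} in reverse with Theorem~\ref{thm_gerdes} discharging step (C), which is precisely your argument. Your explicit execution of step (A) -- using $UI_Y=0$, equations \eqref{eq_pid_x}--\eqref{eq_pid_y}, and non-negativity of $UI_X$ to force $RI=\min\{I(M;X),I(M;Y)\}$, then noting the linear system determines the rest -- correctly fills in what the paper only asserts.
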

\begin{remark} \label{rem_whitening}
	Since $\sigx$ and $\sigy$ are assumed full rank (see Remark~\ref{rem_full_rank}), we can apply a whitening transform:
	\begin{equation} \label{eq_whitened}
		\widetilde H_X \coloneqq \sigx^{-\frac{1}{2}}\hx, \quad \widetilde H_Y \coloneqq \sigy^{-\frac{1}{2}}\hy.
	\end{equation}
	This simplifies \eqref{eq_gauss_bs_cond} to $\widetilde H_X^\T \widetilde H_X \suff \widetilde H_Y^\T \widetilde H_Y$.
	We assume henceforth that $\hx$ and $\hy$ have already been whitened, and hence $\sigx=\sigy=I$.
\end{remark}

\subsection{Efficiently Computing the Gaussian PID in General}
\label{sec_def_approx}

Next, we explore how the Gaussian PID can be computed (approximately) when $P_{MXY}$ does \emph{not} satisfy~\eqref{eq_gauss_bs_cond}.
For this, we consider another PID definition based on the statistical concept of \emph{deficiency}~\cite{banerjee2018unique}, which we call the $\delta$-PID.
\begin{definition}[$\delta$-PID \cite{banerjee2018unique}]\label{banerjee_pid}
	Let the (weighted output) \emph{deficiency}%
	\footnote{There are many ways to define deficiency; Raginsky~\cite{raginsky2011shannon} provides a number of these that consider the worst-case over $M$.
	We prefer an expectation over $M$, since $M$ is a random variable in our setup.}
	of $X$ with respect to $Y$ about $M$ be defined as%
	\footnote{The reason for this notation is that the deficiency of $X$ w.r.t.\ $Y$ translates to the unique information present in $Y$ and not in $X$.}
	\begin{equation}\label{eq_deficiency}
		\df{Y}{X} \coloneqq \quad \inf_{\mathclap{\vphantom{X^{X^X}} P_{Y'|X} \,\in\, \mathcal C(\sampy|\sampx)}} \quad \mathbb{E}_{P_{M}}\bigl[D(P_{Y|M} \,\Vert\, P_{Y'|X}\circ P_{X|M})\bigr],
	\end{equation}
	where $D(\cdot\Vert \cdot)$ is the KL-divergence. Then, define the deficiency-based redundant information about $M$ in $X$ and $Y$ as:
	\begin{equation} \label{eq_ri_delta}
		\begin{aligned}
			RI_{\delta}(M : X ; Y) \coloneqq \min\bigl\{&{} I(M ; X) - \df{X}{Y},\\
														&{} I(M ; Y) - \df{Y}{X}\bigr\}.
		\end{aligned}
	\end{equation}
\end{definition}
\noindent As with the MMI-PID, equations \eqref{eq_pid}--\eqref{eq_pid_y} fully determine the remaining components of the $\delta$-PID---$UI_{\delta X}$, $UI_{\delta Y}$ and $SI_\delta$.

Deficiency finds the channel $P_{Y'|X}$ that minimizes the expected divergence between $P_{Y|M}$ and $P_{Y'|X} \circ P_{X|M}$, where the expectation is over $M$.
The divergence goes to zero if and only if $P_{Y|M} = P_{Y'|X} \circ P_{X|M}$, i.e.\ if $X \suff_M Y$, thus the $\delta$-PID is also Blackwellian~\cite{torgersen1991comparison}.
Thus, deficiency directly measures how far $P_{MXY}$ is from being Blackwell \emph{sufficient} (as stated in~\eqref{eq_blackwell_suff}), and therefore forms a natural measure of unique information.
Equation~\eqref{eq_ri_delta} plays the role of symmetrizing the redundant information, since $I(M ; X) - \df{X}{Y}$ and $I(M ; Y) - \df{Y}{X}$ are not necessarily equal.

Computing the deficiency as given by \eqref{eq_deficiency} is challenging because it requires an optimization over all continuous conditional distributions $P_{Y'|X}$.
So, we consider the restricted problem where $P_{Y'|X}$ lies in the set of linear additive Gaussian noise channels $\mathcal{C}_G(\sampy\given\sampx)\subset \mathcal{C}(\sampy\given\sampx)$.
Then, $P_{Y'|X}$ can be parameterized in terms of its channel gain and noise covariance matrices, $T\in\mathbb{R}^{d_Y \times d_X}$ and $\sigt\in\mathbb{R}^{d_Y\times d_Y}$, $\sigt \suff 0$, as $P_{Y'|X} \sim \mathcal N(TX, \Sigma_T)$.
Thus, the \emph{Gaussian deficiency} can be defined as follows:
\begin{align}
	\MoveEqLeft[1] \dfg{Y}{X} \notag \\
	&\coloneqq \inf_{P_{Y'|X}\in \mathcal{C}_G(\sampy\mid\sampx)} \mathbb{E}_{P_{M}}\left[D(P_{Y\mid M} \,\Vert\, P_{Y'|X}\circ P_{X\mid M})  \right] \label{eq_gaussdef1} \\
	&= \inf_{T,\sigt\suff 0} \frac{1}{2}\biggl[\begin{aligned}[t]
			&\ \mathbb{E}_{P_M}\Bigl[\bigl\lVert(T\hx-\hy)M\bigr\rVert_{\sigt + TT^\T}^2\Bigr] \\
			&+ \operatorname{Tr}\left\{(\sigt + TT^\T)^{-1}\right\} \\
			&+ \log \mathrm{det}(\sigt + TT^\T) - d_Y
		\biggr],\end{aligned} \label{eq_gaussdef2}
\end{align}
where $\Vert a \Vert_B^2 \coloneqq a B^{-1} a^\T$ is the squared Mahalanobis distance, and \eqref{eq_gaussdef2} follows from the expression for the KL-divergence of normal distributions (derived in Appendix~\ref{app_misc_derivs}) and using Remark~\ref{rem_whitening}.
Unfortunately, we cannot prove that this problem is convex, since $\operatorname{Tr}\{(\sigt + TT^\T)^{-1}\}$ is the composition of a convex function ($\operatorname{Tr}\{(\cdot)^{-1}\}$) with a non-monotonic function of $T$.
Therefore, we propose a convex approximation of \eqref{eq_gaussdef2} to find an \emph{approximate} minimizer $\widehat{P}_{Y'|X}{\,=\,}\mathcal{N}(\widehat{T} X,\sigth)$:
\begin{definition}[$\widehat\delta_G$ and the $\widehat\delta_G$-PID] \label{def_approx_gaussdef}
	Let
	\begin{align}
		\widehat{T} &\coloneqq \underset{T}{\operatorname{argmin}} \;\; \mathbb{E}_{P_M} \left[ \bigl\lVert (T\hx - \hy)M \bigr\rVert_{I + \hy \sigm \hy^\T}^2 \right] \label{eq_th} \\
					&\hphantom{\coloneqq \vphantom{A}} \; \textup{s.t.} \;\; I + \hy\sigm\hy^\T - T (I+\hx\sigm\hx^\T) T^\T \suff 0 \notag \\[5pt]
					\sigth &\coloneqq I + \hy\sigm\hy^\T - \widehat{T}(I+\hx\sigm\hx^\T)\widehat{T}^\T \label{eq_sigth}
	\end{align}
	Then, we define the approximate deficiency to be:
	\begin{align} \label{eq_dfh}
		\dfh{Y}{X} \coloneqq \frac{1}{2}\bigg[
			&\mathbb{E}_{P_M}\Bigl[\bigl\lVert( \widehat{T}\hx-\hy)M\bigr\rVert_{\sigth + \widehat{T}\widehat{T}^\T}^2\Bigr] \notag \\
			&+ \operatorname{Tr}\bigl\{(\sigth + \widehat{T}\widehat{T}^\T)^{-1}\bigr\} \notag \\
			&+ \log\det(\sigth + \widehat{T}\widehat{T}^\T) - d_Y
		\smash{\bigg]}
	\end{align}
	Substituting $\widehat\delta_G$ into \eqref{eq_ri_delta}, we obtain an approximation of $RI_\delta$, and hence of the $\delta$-PID, which we call the $\widehat\delta_G$-PID.
\end{definition}
For brevity, we use $\widehat{UI}_X$, $\widehat{UI}_Y$, $\widehat{RI}$ and $\widehat{SI}$ to refer to the constituent atoms of the $\widehat\delta_G$-PID.

The derivation of the convex formulation in~\eqref{eq_th} has three main steps, which we summarize here and describe in detail in Appendix~\ref{app_justification}\ifarxiv{}{ in the full version~\citep{full_version}}.
\begin{enumerate}[nosep, leftmargin=*]
    \item First, we obtain a condition on $\sigt$ (in terms of $T$), which is locally optimal. This condition allows us to reduce the optimization problem from two variables ($\sigt$ and $T$) to just one ($T$~alone). This step is \emph{exact}, and is shown in Proposition~\ref{prop_gaussdef_simplifcn} in Appendix~\ref{app_justification}.
    \item Next, we reinterpret and approximate the objective to significantly simplify its functional form, while attempting to minimize the same entities.
    \item Lastly, as part of reducing the optimization problem to a single variable, the constraint $\sigt \suff 0$ is now replaced by a positive semidefiniteness constraint in terms of $T$. We approximate and simplify this constraint to make it more amenable for a convex program to handle, which yields the condition shown in~\eqref{eq_sigth}.
\end{enumerate}
This approximation can be shown to satisfy certain desirable properties.
Firstly, $\widehat\delta_G$ is well defined, i.e., $\sigth + \widehat{T}\widehat{T}^\T$ is guaranteed to be invertible when using the approximation of Definition~\ref{def_approx_gaussdef}.
This is formally stated and proved in Lemma~\ref{lemma_valid}, which can be found in Appendix~\ref{app_lemma_valid}\ifarxiv{}{ in the full version~\citep{full_version}}.

Secondly, the estimate in \eqref{eq_dfh} provably coincides with the true deficiency in the following circumstances:
\begin{proposition}\label{prop_extremes}
	For jointly Gaussian $P_{MXY}$,
	\begin{align}
		\widehat\delta_G(M : Y \sm X) = 0 \,&\Leftrightarrow\, \delta(M : Y \sm X) = 0 \label{eq_zero_def2}\\
		\delta(M{:\,}Y{\sm}X)=I(M;Y) \,&\Rightarrow\, \widehat\delta_G(M{:\,}Y{\sm}X)=I(M;Y) \label{eq_mi_def2}
	\end{align}
\end{proposition}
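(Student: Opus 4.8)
The plan is to establish the two halves of Proposition~\ref{prop_extremes} separately, working throughout in the whitened coordinates of Remark~\ref{rem_whitening} (so that $\sigx = \sigy = I$ and, by Remark~\ref{rem_gauss_mmi_pid}, $I(M;Y) = \tfrac12\log\det(I + \hy\sigm\hy^\T)$), and using the central bookkeeping fact that, by construction, $\dfh{Y}{X}$ of~\eqref{eq_dfh} is precisely the Gaussian-deficiency objective~\eqref{eq_gaussdef2} evaluated at the pair $(\widehat T, \sigth)$. For readability I will write identities such as $\widehat T\hx = \hy$ as shorthand for $(\widehat T\hx - \hy)\sigm = 0$, which is all one ever needs.

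For~\eqref{eq_zero_def2}, put $S \coloneqq \sigth + \widehat T\widehat T^\T$, which is positive definite (invertible by Lemma~\ref{lemma_valid}; PSD because $\sigth \suff 0$ is enforced by~\eqref{eq_th}). Then~\eqref{eq_dfh} reads
\[
\dfh{Y}{X} \;=\; \tfrac12\Bigl(\,\mathbb{E}_{P_M}\bigl[\lVert(\widehat T\hx - \hy)M\rVert_S^2\bigr] \;+\; \bigl(\Tr(S^{-1}) + \log\det S - d_Y\bigr)\,\Bigr),
\]
and both summands are nonnegative --- the second because $\Tr(S^{-1}) + \log\det S - d_Y = \sum_i(\lambda_i^{-1} + \log\lambda_i - 1) \ge 0$ over the eigenvalues $\lambda_i$ of $S$, with equality iff $S = I$. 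Hence $\dfh{Y}{X} = 0$ iff $\widehat T\hx = \hy$, since~\eqref{eq_sigth} then forces $\sigth = I - \widehat T\widehat T^\T$, so $S = I$ and $\widehat T\widehat T^\T \ffus I$. In that case $P_{Y'|X} = \mathcal N(\widehat T X,\, I - \widehat T\widehat T^\T)$ is a valid channel with $P_{Y'|X}\circ P_{X|M} = P_{Y|M}$, so the infimum in~\eqref{eq_deficiency} vanishes, i.e.\ $\df{Y}{X} = 0$. Conversely $\df{Y}{X} = 0$ means $X \suff_M Y$, i.e.\ (Theorem~\ref{thm_gerdes}) $\hx^\T\hx \suff \hy^\T\hy$; the proof of that theorem, via stochastic degradedness, supplies a linear-Gaussian simulating channel, i.e.\ a matrix $T^{*}$ (concretely $T^{*} = \hy\hx^{+}$) with $T^{*}\hx = \hy$ and $T^{*}T^{*\T} \ffus I$. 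A one-line calculation, $I + \hy\sigm\hy^\T - T^{*}(I + \hx\sigm\hx^\T)T^{*\T} = I - T^{*}T^{*\T} \suff 0$, shows $T^{*}$ is feasible for~\eqref{eq_th} with objective value $0$, so $\widehat T\hx = \hy$ and therefore $\dfh{Y}{X} = 0$.

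For~\eqref{eq_mi_def2}, the heart of the matter is the claim that $\df{Y}{X} = I(M;Y)$ forces $\hx\sigm\hy^\T = 0$. First, the data-discarding Gaussian channel $(T,\sigt) = (0,\, I + \hy\sigm\hy^\T)$ is feasible in~\eqref{eq_gaussdef2} and a direct evaluation gives it objective value $\tfrac12\log\det(I + \hy\sigm\hy^\T) = I(M;Y)$; since restricting to Gaussian channels cannot lower the deficiency, $\df{Y}{X} \le \dfg{Y}{X} \le I(M;Y)$. To see this is strict whenever $\hx\sigm\hy^\T \neq 0$, perturb this point along $T = \epsilon (I + \hy\sigm\hy^\T)^{-1}\hy\sigm\hx^\T$ while keeping $\sigt + TT^\T$ fixed at $I + \hy\sigm\hy^\T$ (legal for small $\epsilon$, with $\sigt \suff 0$); the objective of~\eqref{eq_gaussdef2} then depends on $\epsilon$ only through $(T\hx - \hy)$ in its first term, and its derivative at $\epsilon = 0$ equals $-\Tr\bigl((\hx\sigm\hy^\T)(I + \hy\sigm\hy^\T)^{-2}(\hx\sigm\hy^\T)^\T\bigr)$, which is strictly negative unless $\hx\sigm\hy^\T = 0$. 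Hence $\dfg{Y}{X} < I(M;Y)$, so $\df{Y}{X} < I(M;Y)$, and the claim follows by contraposition. Finally, assuming $\hx\sigm\hy^\T = 0$: the convex objective $g(T) = \Tr\bigl((T\hx - \hy)\sigm(T\hx - \hy)^\T(I + \hy\sigm\hy^\T)^{-1}\bigr)$ of~\eqref{eq_th} has $\nabla g(0) = -2(I + \hy\sigm\hy^\T)^{-1}\hy\sigm\hx^\T = 0$, and $T = 0$ is strictly feasible, so $T = 0$ minimizes~\eqref{eq_th} and any minimizer $\widehat T$ satisfies $\widehat T\hx\sigm\hx^\T = \hy\sigm\hx^\T = 0$ (hence $\widehat T\hx\sigm = 0$). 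Thus $\widehat T(I + \hx\sigm\hx^\T)\widehat T^\T = \widehat T\widehat T^\T$, so $\sigth + \widehat T\widehat T^\T = I + \hy\sigm\hy^\T$ by~\eqref{eq_sigth}, and substituting this together with $g(\widehat T) = g(0)$ into~\eqref{eq_dfh} collapses it to $\tfrac12\log\det(I + \hy\sigm\hy^\T) = I(M;Y)$.

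The main obstacle is the perturbation step: showing that \emph{any} residual cross-covariance $\hx\sigm\hy^\T$ lets a small linear use of $X$ strictly beat the data-discarding channel, so that maximal deficiency $\df{Y}{X} = I(M;Y)$ can occur only when $\hx\sigm\hy^\T = 0$. The remaining ingredients --- the whitening reductions, feasibility and attainment for the convex program~\eqref{eq_th} (its feasible set is compact, so $\widehat T$ exists), the availability of a linear-Gaussian degrading channel from Theorem~\ref{thm_gerdes}, and the scalar inequality $t^{-1} + \log t \ge 1$ --- are routine.
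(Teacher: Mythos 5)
Your proof is correct, and while it leans on the same basic reductions as the paper's (the convex program \eqref{eq_th} and the sandwich \eqref{eq_def_ineqs}), it takes a genuinely different route in both halves. For \eqref{eq_zero_def2}, you characterize $\dfh{Y}{X}=0$ intrinsically via the two nonnegative summands of \eqref{eq_dfh} (using $\Tr(S^{-1})+\log\det S - d_Y\ge 0$ with equality iff $S=I$), and obtain the linear simulating channel from Theorem~\ref{thm_gerdes} and Lemma~\ref{lem_shang_poor}; the paper instead routes through Torgersen's result $\delta=0\Rightarrow\delta_G=0$ and reads the witness $T^*$ off the structure of the $\delta_G$ objective, and dispatches the converse in one line from $\widehat\delta_G\ge\delta\ge 0$. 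Your version is more self-contained on the converse but still needs an attainment fact for ``$\df{Y}{X}=0$ implies $X\suff_M Y$,'' which is the same Torgersen-type input the paper uses. For \eqref{eq_mi_def2}, you isolate the clean intermediate claim $\delta=I(M;Y)\Rightarrow \hx\sigm\hy^\T=0$ by a first-order perturbation of the \emph{true} Gaussian objective \eqref{eq_gaussdef2} along a direction that keeps $\sigt+TT^\T$ fixed, whereas the paper runs a convexity/shrinking argument on the surrogate objective and transfers it to the KL objective; both reach the same contradiction $\dfg{Y}{X}<I(M;Y)\le\df{Y}{X}$. Your approach buys two things: an explicit structural condition ($\hx\sigm\hy^\T=0$) that explains \emph{when} the maximal-deficiency case occurs, and a treatment of \emph{all} minimizers of \eqref{eq_th} (any $\widehat T$ with $\widehat T\hx\sigm=0$ yields $\sigth+\widehat T\widehat T^\T = I+\hy\sigm\hy^\T$ and hence $\dfh{Y}{X}=I(M;Y)$), which closes a small non-uniqueness gap the paper's proof glosses over by assuming $\widehat T=\mathbf 0$. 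Two trivial nits: your perturbation derivative is missing a factor of $2$ (both cross terms contribute), which does not affect the sign or the conclusion; and the parenthetical formula $T^*=\hy\hx^{+}$ is unnecessary, since Lemma~\ref{lem_shang_poor} already supplies existence.
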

\noindent A proof is given in Appendix~\ref{app_prop_extremes}\ifarxiv{}{ in the full version~\cite{full_version}}.

Thirdly, since $\widehat{P}_{Y'|X}\in \mathcal{C}_G(\sampy\given\sampx) \subset \mathcal{C}(\sampy\given\sampx)$, we have:
\begin{equation}\label{eq_def_ineqs}
\dfh{Y}{X} \ge \dfg{Y}{X} \ge \df{Y}{X}
\end{equation}
These inequalities show that the $\widehat\delta_G$-PID bounds the true $\delta$-PID:
\begin{proposition}\label{prop_pid_bounds}
For jointly Gaussian random vectors $M$, $X$ and $Y$, the $\widehat{\delta}_G$-PID bounds the $\delta$-PID:
\begin{equation} \label{eq_bounds}
    \begin{aligned}
		\widehat{UI}_X & \ge {UI}_{\delta X} & \widehat{RI} & \le {RI}_\delta \\
		\widehat{UI}_Y & \ge {UI}_{\delta Y} & \widehat{SI} & \le {SI}_\delta
    \end{aligned}
\end{equation}
\end{proposition}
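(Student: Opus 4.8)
The plan is to reduce everything to the single inequality $\widehat{RI}\le RI_\delta$, and then propagate it to the other three atoms using only the defining PID equations. The starting point is the chain of inequalities in \eqref{eq_def_ineqs}, namely $\dfh{Y}{X}\ge\dfg{Y}{X}\ge\df{Y}{X}$, together with its mirror image $\dfh{X}{Y}\ge\dfg{X}{Y}\ge\df{X}{Y}$ obtained by exchanging the roles of $X$ and $Y$ (the construction in Definition~\ref{def_approx_gaussdef} is symmetric under this swap, so the argument establishing \eqref{eq_def_ineqs} applies verbatim in the other direction). In short, $\widehat\delta_G$ is an overestimate of the true deficiency $\delta$ in \emph{both} directions.

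First I would establish $\widehat{RI}\le RI_\delta$. Both quantities are obtained by substituting the respective deficiencies into the symmetrizing formula \eqref{eq_ri_delta}. Since $\dfh{X}{Y}\ge\df{X}{Y}$ gives $I(M;X)-\dfh{X}{Y}\le I(M;X)-\df{X}{Y}$, and likewise $I(M;Y)-\dfh{Y}{X}\le I(M;Y)-\df{Y}{X}$, and since the minimum of two reals is monotone in each argument, the minimum of the two ``hatted'' expressions is at most the minimum of the two true ones; that is exactly $\widehat{RI}\le RI_\delta$.

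Next I would transfer this to $UI$ and $SI$ via \eqref{eq_pid}--\eqref{eq_pid_y}, noting that the $\delta$-PID and the $\widehat\delta_G$-PID are both determined by plugging their redundancy into the \emph{same} equations with the \emph{same} left-hand sides $I(M;X)$, $I(M;Y)$, $I\bigl(M;(X,Y)\bigr)$, which depend only on $P_{MXY}$. From \eqref{eq_pid_x} and \eqref{eq_pid_y}, $UI_X=I(M;X)-RI$ and $UI_Y=I(M;Y)-RI$; combining these with \eqref{eq_pid} gives $SI=I\bigl(M;(X,Y)\bigr)-I(M;X)-I(M;Y)+RI$. Hence $\widehat{UI}_X-UI_{\delta X}=\widehat{UI}_Y-UI_{\delta Y}=-(\widehat{RI}-RI_\delta)\ge 0$ while $\widehat{SI}-SI_\delta=\widehat{RI}-RI_\delta\le 0$, which is precisely \eqref{eq_bounds}.

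I do not expect a real obstacle in this argument; it is essentially bookkeeping. The two points requiring minor care are (i) justifying the mirrored version of \eqref{eq_def_ineqs} by appealing to the $X\leftrightarrow Y$ symmetry of Definition~\ref{def_approx_gaussdef}, and (ii) tracking the sign reversal when a fixed mutual information has a larger deficiency subtracted from it. (If one also wanted to certify that the $\widehat\delta_G$-PID is a genuine PID, i.e.\ that all four atoms are non-negative, that would additionally require the validity/extreme-case facts of Lemma~\ref{lemma_valid} and Proposition~\ref{prop_extremes}, but this is not needed for the stated bounds.)
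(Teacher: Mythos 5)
Your proof is correct and is essentially the argument the paper intends: Proposition~\ref{prop_pid_bounds} is presented as an immediate consequence of \eqref{eq_def_ineqs} (together with its $X\leftrightarrow Y$ mirror), with the monotonicity of the minimum in \eqref{eq_ri_delta} and the bookkeeping through \eqref{eq_pid}--\eqref{eq_pid_y} left implicit. Your write-up just makes those omitted steps explicit, including the correct sign reversal when passing from deficiencies to redundancy and then to the unique and synergistic atoms.
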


\section{Empirical Results}
\label{sec_simulations}

Having established and theoretically justified a framework for approximating the $\delta$\nobreakdash-PID, we address three questions using simulations:
(Q1) Does $\widehat\delta_G$ yield a \emph{non-negative} PID for jointly Gaussian distributions?
(Q2) Is the $\widehat\delta_G$-PID consistent with the results of Barrett~\cite{barrett2015exploration} for the case of univariate $M$?
(Q3) Does the $\widehat\delta_G$-PID meet our intuition for different $d_M$, $d_X$ and $d_Y$? %
(Q4) How good is the approximation provided by the $\widehat\delta_G$-PID?

To address these questions, we sampled 80,000 joint covariance matrices $\Sigma\in\mathbb{R}^{d\times d}$ ($d{\,=\,}d_M {\,+\,} d_X {\,+\,} d_Y$) from a standard Wishart distribution, which fully characterized $P_{MXY}$.
We sampled 20,000 matrices each from four sampling schemes:
\begin{enumerate}[label=(S\arabic*), nosep, leftmargin=*]
    \item $d_M \sim \operatorname{Unif}\{1 \dots 10\}$ and $d_X=d_Y=d_M$
    \item $d_M \sim \operatorname{Unif}\{1 \dots 9\}$ and $d_X,d_Y \overset{\text{iid}}{\sim} \operatorname{Unif}\{d_M+1 \ldots 10\}$
    \item $d_M \sim \operatorname{Unif}\{2 \dots 10\}$ and $d_X,d_Y \overset{\text{iid}}{\sim} \operatorname{Unif}\{1 \ldots d_M-1\}$
    \item $\vphantom{\overset{\text{iid}}{\sim}}d_M \sim \operatorname{Unif}\{2 \dots 9\}$, $d_X \sim \operatorname{Unif}\{1 \ldots d_M-1\}$, and $d_Y \sim \operatorname{Unif}\{d_M+1 \ldots 10\}$
\end{enumerate}
Without loss of generality, we took $d_X \le d_Y$ (i.e., we switched their values if $d_Y < d_X$).
Simulations were performed on a 2.4 GHz 8-Core Intel Core i9 processor and took 188.8 minutes to complete.
Details of the experimental setup and implementation are provided in Appendices \ref{app_cvx} and \ref{app_exp} \ifarxiv{}{\citep{full_version}} and the code is available on Github~\citep{code}.

First, we found that all estimated PID components were non-negative for \emph{every one} of the 80,000 sampled distributions.
This suggests an affirmative answer to (Q1).%
Secondly, we found that in every distribution with $d_M=1$ ($N=4277$), only one of $X$ or $Y$ had unique information.
Therefore, the $\widehat\delta_G$-PID is consistent with Barrett's result in our experiments, answering (Q2) in the affirmative.
Of the remaining 75,723 distributions for which $d_M > 1$, $X$ and $Y$ \emph{both} had unique information in 93.6\%
Thus, the overwhelming majority of Gaussian distributions with vector $M$ \emph{do not} satisfy Blackwell sufficiency and do not reduce to the MMI-PID, justifying the need for our approximation-based approach.

To address (Q3), we visualize the distribution of unique, redundant, and synergistic information across all four sampling schemes in Figure~\ref{fig_simplex}.
Since the scale of the PID components varies with $I(M;(X,Y))$, we consider the normalized PID quantities $\widebar{UI}_X$, $\widebar{UI}_Y$, $\widebar{RI}$, and $\widebar{SI}$ obtained by dividing each PID component by $I(M;(X,Y))$.
These normalized values are all non-negative and satisfy $\widebar{UI}_X+\widebar{UI}_Y+\widebar{RI}+\widebar{SI}=1$.
In Figure~\ref{fig_simplex} we represent each distribution by its location on a 3-simplex that characterizes the proportion of $I(M;(X,Y))$ that is accounted for by each PID component.
Below the 3-simplexes for each sampling scheme, we also show the distribution of $\widebar{UI}_X$, $\widebar{UI}_Y$, $\widebar{RI}$, and $\widebar{SI}$ in the form of box plots.

Figure \ref{fig_simplex} shows how our approximate PID for multivariate Gaussians meets our intuitive expectations.
Firstly, since $d_Y \geq d_X$, $Y$ tends to have more unique information than $X$ (see box plots), excepting (S1) where $d_X = d_Y$.
Secondly, (S2) with $d_M < d_X, d_Y$ closely mimics the scalar-$M$ case, rarely exhibiting unique information in both $X$ and $Y$ simultaneously.
This is seen in the simplex plot, wherein the isosceles triangle forming the lower third of the simplex (with vertices at $\widebar{UI}_X$, $\widebar{UI}_Y$ and the centroid of the simplex) is almost completely devoid of points.
Thirdly, (S3) and (S4) have large amounts of unique information in $Y$ (as seen in the box plots), since $d_M$ is large and provides more dimensions of $M$ that $Y$ can uniquely capture.

Synergy has a strong prevalence under all four sampling schemes (see box plots), with the greatest prevalence in (S2), when $d_M < d_X, d_Y$.
This agrees with the intuition provided by Barrett~\citep{barrett2015exploration}, who mentions the prevalence of synergy for Gaussians with $d_M = 1$.
Redundancy is never particularly prevalent---this may be due to how we sample random covariance matrices, which reduces the likelihood that $X$ and $Y$ capture the same dimensions of $M$.
(S3) and (S4) have the least redundancy (as observed in the box plots), since $X$ has fewer dimensions than $M$ in both these cases.

To address (Q4), unfortunately, there are no alternative estimators of the $\delta$-PID that can serve as potential baselines.
Nor are there alternative estimators of Blackwellian PIDs for Gaussian variables (excepting the scalar-$M$ case, for which Proposition~\ref{prop_extremes} already provides guarantees).
Therefore, we compare the $\widehat\delta_G$-PID with the best available estimator~\cite{banerjee2018computing} for the $\sim$-PID.
These results are explained in the caption of Figure~\ref{fig_poiss_vs_gauss}, with further details in Appendix~\ref{app_poiss_vs_gauss}\ifarxiv{}{in the full version of this paper~\cite{full_version}}.

\begin{figure}[t]
    \centering
	\includegraphics[width=\linewidth]{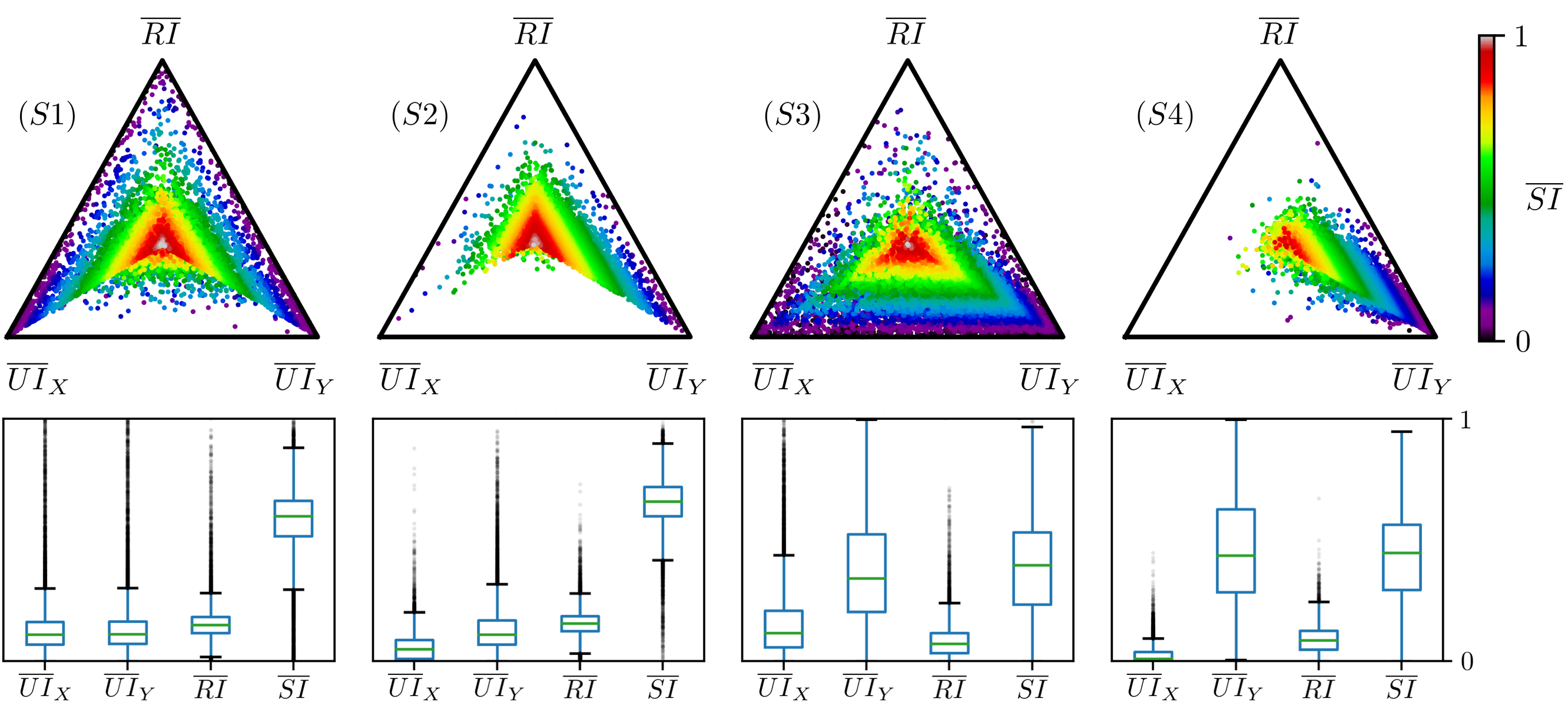}\\%
    \caption{Distribution of normalized unique, redundant, and synergistic information for Gaussian systems sampled from a standard Wishart distribution: (S1)~$d_M=d_X=d_Y$; (S2) $d_M < d_X\le d_Y$; (S3) $d_X\le d_Y< d_M$; and (S4)~$d_X < d_M < d_Y$.
		Top row: scatter plots of the computed $\protect\widebar{UI}_X$, $\protect\widebar{UI}_Y$, $\protect\widebar{RI}$ and $\protect\widebar{SI}$ on the 3-simplex (3D views in Appendix~\ref{app_3d_views}\protect\ifarxiv{}{ in the full version~\citep{full_version}}).
		Each point is a single sampled Gaussian system.
		Bottom row: box plots showing the relative prevalence of each PID atom.
		The box shows the median and the first and third quartiles, while the whiskers extend to 1.5 times the interquartile range (difference of third and first quartiles).
	}
    \label{fig_simplex}
\end{figure}

\begin{figure}[t]
	\centering
	\includegraphics[width=0.5\linewidth]{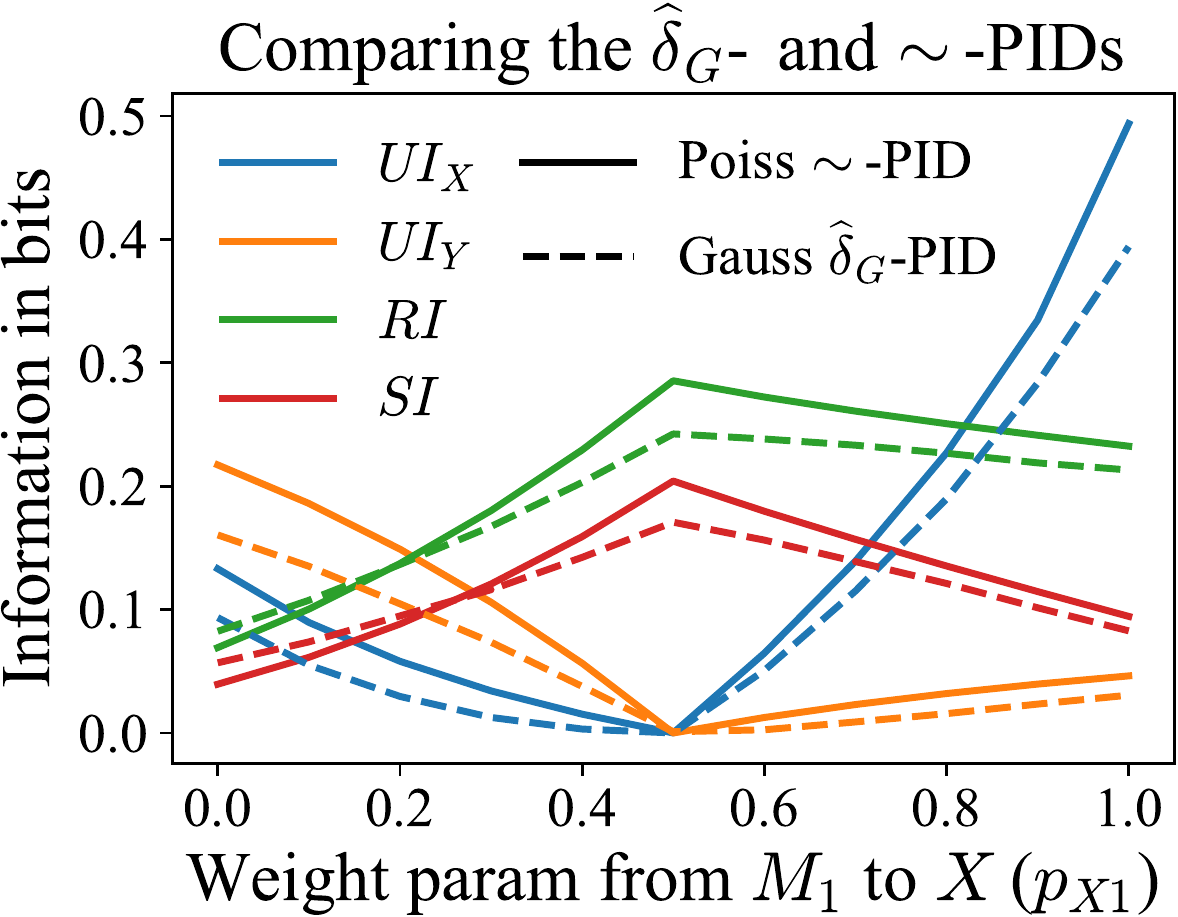}
	\caption{A comparison of the $\sim$-PID evaluated~\citep{banerjee2018computing} on a multivariate Poisson distribution against the $\widehat\delta_G$-PID evaluated on the covariance matrix of the same distribution.
		We take $M = [M_1, M_2]$, with $M_1, M_2 \sim$ iid.~Poiss$(2)$ and $X \sim$ Binom$(M_1, p_{X1}) +$ Binom$(M_2, p_{X2}) +$ Poiss$(1)$.
		$Y$ is defined similar to $X$.
		$p_{X1}$ is varied between 0.0 and 1.0 in increments of 0.1, while $p_{X2}$, $p_{Y1}$ and $p_{Y2}$ are kept fixed at 0.5.
		Despite differences in the PID definitions, errors in approximating the Poisson distribution as a Gaussian, and errors in the two estimates, there is a reasonable agreement between the two PIDs.
	}
	\label{fig_poiss_vs_gauss}
\end{figure}

\section{Summary and Open Questions}
\label{sec_discussion}

Barrett~\cite{barrett2015exploration} showed that a Blackwellian PID (the $\sim$-PID) can be easily computed for Gaussian $P_{MXY}$ with scalar $M$.
Theorem~\ref{thm_gerdes} shows that this extends to vector $M$ only under a specific condition, which likely holds only for a small minority of Gaussian distributions.
Therefore, we provided a convex optimization framework to approximate the $\delta$-PID, to enable applications of the PID on multivariate Gaussian data.

Our framework forms the first efficient method for computing Blackwellian PIDs (even approximately) for high-dimensional distributions.
Although we have not theoretically proved that the $\widehat\delta_G$-PID is non-negative, we found no negative instances empirically.
Several open questions remain: Can deficiency be computed exactly?
When does our approximation agree with the true deficiency (beyond Proposition~\ref{prop_extremes})?
Can we characterize Blackwell sufficiency for general probability distributions (e.g., see~\citep{makur2018comparison})?
Finally, applying the $\widehat\delta_G$-PID in practice also requires studying its statistical properties when estimating covariance matrices from data and providing statements of confidence.

\IEEEtriggeratref{23}
\bibliographystyle{IEEEtran}
\bibliography{IEEEabrv,references}

% Generated by IEEEtran.bst, version: 1.12 (2007/01/11)
\begin{thebibliography}{10}
\providecommand{\url}[1]{#1}
\csname url@samestyle\endcsname
\providecommand{\newblock}{\relax}
\providecommand{\bibinfo}[2]{#2}
\providecommand{\BIBentrySTDinterwordspacing}{\spaceskip=0pt\relax}
\providecommand{\BIBentryALTinterwordstretchfactor}{4}
\providecommand{\BIBentryALTinterwordspacing}{\spaceskip=\fontdimen2\font plus
\BIBentryALTinterwordstretchfactor\fontdimen3\font minus
  \fontdimen4\font\relax}
\providecommand{\BIBforeignlanguage}[2]{{%
\expandafter\ifx\csname l@#1\endcsname\relax
\typeout{** WARNING: IEEEtran.bst: No hyphenation pattern has been}%
\typeout{** loaded for the language `#1'. Using the pattern for}%
\typeout{** the default language instead.}%
\else
\language=\csname l@#1\endcsname
\fi
#2}}
\providecommand{\BIBdecl}{\relax}
\BIBdecl

\bibitem{barrett2015exploration}
A.~B. Barrett, ``Exploration of synergistic and redundant information sharing
  in static and dynamical {G}aussian systems,'' \emph{Physical Review E},
  vol.~91, no.~5, p. 052802, 2015.

\bibitem{williams2010nonnegative}
P.~L. Williams and R.~D. Beer, ``Nonnegative decomposition of multivariate
  information,'' \emph{arXiv preprint arXiv:1004.2515}, 2010.

\bibitem{bertschinger2014quantifying}
N.~Bertschinger, J.~Rauh, E.~Olbrich, J.~Jost, and N.~Ay, ``Quantifying unique
  information,'' \emph{Entropy}, vol.~16, no.~4, pp. 2161--2183, 2014.

\bibitem{harder2013bivariate}
M.~Harder, C.~Salge, and D.~Polani, ``Bivariate measure of redundant
  information,'' \emph{Physical Review E}, vol.~87, no.~1, p. 012130, 2013.

\bibitem{banerjee2018unique}
P.~K. Banerjee, E.~Olbrich, J.~Jost, and J.~Rauh, ``Unique informations and
  deficiencies,'' in \emph{2018 56th Annual Allerton Conference on
  Communication, Control, and Computing (Allerton)}.\hskip 1em plus 0.5em minus
  0.4em\relax IEEE, 2018, pp. 32--38.

\bibitem{niu2019measure}
X.~Niu and C.~J. Quinn, ``A measure of synergy, redundancy, and unique
  information using information geometry,'' in \emph{2019 IEEE International
  Symposium on Information Theory (ISIT)}.\hskip 1em plus 0.5em minus
  0.4em\relax IEEE, 2019, pp. 3127--3131.

\bibitem{finn2018pointwise}
C.~Finn and J.~T. Lizier, ``Pointwise partial information decomposition using
  the specificity and ambiguity lattices,'' \emph{Entropy}, vol.~20, no.~4, p.
  297, 2018.

\bibitem{kay2018exact}
J.~W. Kay and R.~A. Ince, ``Exact partial information decompositions for
  gaussian systems based on dependency constraints,'' \emph{Entropy}, vol.~20,
  no.~4, p. 240, 2018.

\bibitem{lizier2018information}
J.~T. Lizier, N.~Bertschinger, J.~Jost, and M.~Wibral, ``Information
  decomposition of target effects from multi-source interactions: perspectives
  on previous, current and future work,'' \emph{Entropy}, vol.~20, no.~4, p.
  307, 2018.

\bibitem{full_version}
\BIBentryALTinterwordspacing
Full version of this paper with appendices. [Online]. Available:
  \url{https://praveenv253.github.io/assets/doc/papers/2022--isit--full-paper.pdf}
\BIBentrySTDinterwordspacing

\bibitem{colenbier2020disambiguating}
N.~Colenbier, F.~Van~de Steen, L.~Q. Uddin, R.~A. Poldrack, V.~D. Calhoun, and
  D.~Marinazzo, ``Disambiguating the role of blood flow and global signal with
  partial information decomposition,'' \emph{Neuroimage}, vol. 213, p. 116699,
  2020.

\bibitem{boonstra2019information}
T.~W. Boonstra, L.~Faes, J.~N. Kerkman, and D.~Marinazzo, ``Information
  decomposition of multichannel emg to map functional interactions in the
  distributed motor system,'' \emph{NeuroImage}, vol. 202, p. 116093, 2019.

\bibitem{krohova2019multiscale}
J.~Krohova, L.~Faes, B.~Czippelova, Z.~Turianikova, N.~Mazgutova, R.~Pernice,
  A.~Busacca, D.~Marinazzo, S.~Stramaglia, and M.~Javorka, ``Multiscale
  information decomposition dissects control mechanisms of heart rate
  variability at rest and during physiological stress,'' \emph{Entropy},
  vol.~21, no.~5, p. 526, 2019.

\bibitem{timme2018tutorial}
N.~M. Timme and C.~Lapish, ``A tutorial for information theory in
  neuroscience,'' \emph{eneuro}, vol.~5, no.~3, 2018.

\bibitem{pica2017quantifying}
G.~Pica, E.~Piasini, H.~Safaai, C.~Runyan, C.~Harvey, M.~Diamond, C.~Kayser,
  T.~Fellin, and S.~Panzeri, ``Quantifying how much sensory information in a
  neural code is relevant for behavior,'' in \emph{Advances in Neural
  Information Processing Systems}, vol.~30, 2017.

\bibitem{gat1999synergy}
I.~Gat and N.~Tishby, ``Synergy and redundancy among brain cells of behaving
  monkeys,'' \emph{Advances in neural information processing systems}, pp.
  111--117, 1999.

\bibitem{schneidman2003synergy}
E.~Schneidman, W.~Bialek, and M.~J. Berry, ``Synergy, redundancy, and
  independence in population codes,'' \emph{Journal of Neuroscience}, vol.~23,
  no.~37, pp. 11\,539--11\,553, 2003.

\bibitem{brenner2000synergy}
N.~Brenner, S.~P. Strong, R.~Koberle, W.~Bialek, and R.~R. d. R.~v. Steveninck,
  ``Synergy in a neural code,'' \emph{Neural computation}, vol.~12, no.~7, pp.
  1531--1552, 2000.

\bibitem{venkatesh2020information}
P.~Venkatesh, S.~Dutta, and P.~Grover, ``Information flow in computational
  systems,'' \emph{IEEE Transactions on Information Theory}, vol.~66, no.~9,
  pp. 5456--5491, 2020.

\bibitem{scagliarini2020synergistic}
T.~Scagliarini, L.~Faes, D.~Marinazzo, S.~Stramaglia, and R.~N. Mantegna,
  ``Synergistic information transfer in the global system of financial
  markets,'' \emph{Entropy}, vol.~22, no.~9, p. 1000, 2020.

\bibitem{dutta2020information}
S.~Dutta, P.~Venkatesh, P.~Mardziel, A.~Datta, and P.~Grover, ``An
  information-theoretic quantification of discrimination with exempt
  features,'' in \emph{Proceedings of the AAAI Conference on Artificial
  Intelligence}, vol.~34, no.~04, 2020, pp. 3825--3833.

\bibitem{blackwell1953equivalent}
D.~Blackwell, ``Equivalent comparisons of experiments,'' \emph{The Annals of
  Mathematical Statistics}, pp. 265--272, 1953.

\bibitem{banerjee2018computing}
P.~K. Banerjee, J.~Rauh, and G.~Mont{\'u}far, ``Computing the unique
  information,'' in \emph{2018 IEEE International Symposium on Information
  Theory (ISIT)}.\hskip 1em plus 0.5em minus 0.4em\relax IEEE, 2018, pp.
  141--145.

\bibitem{torgersen1991comparison}
E.~Torgersen, \emph{Comparison of statistical experiments}.\hskip 1em plus
  0.5em minus 0.4em\relax Cambridge University Press, 1991, vol.~36.

\bibitem{raginsky2011shannon}
M.~Raginsky, ``Shannon meets {B}lackwell and {L}e {C}am: {C}hannels, codes, and
  statistical experiments,'' in \emph{2011 IEEE International Symposium on
  Information Theory Proceedings}.\hskip 1em plus 0.5em minus 0.4em\relax IEEE,
  2011, pp. 1220--1224.

\bibitem{gerdes2015equivalence}
L.~Gerdes, M.~Riemensberger, and W.~Utschick, ``On the equivalence of degraded
  {G}aussian {MIMO} broadcast channels,'' in \emph{WSA 2015; 19th International
  ITG Workshop on Smart Antennas}.\hskip 1em plus 0.5em minus 0.4em\relax VDE,
  2015, pp. 1--5.

\bibitem{shang2012noisy}
X.~Shang and H.~V. Poor, ``Noisy-interference sum-rate capacity for vector
  {G}aussian interference channels,'' \emph{IEEE transactions on information
  theory}, vol.~59, no.~1, pp. 132--153, 2012.

\bibitem{code}
\BIBentryALTinterwordspacing
Multivariate Gaussian PID package. [Online]. Available:
  \url{https://github.com/gabeschamberg/mvar-gauss-pid}
\BIBentrySTDinterwordspacing

\bibitem{makur2018comparison}
A.~Makur and Y.~Polyanskiy, ``Comparison of channels: {C}riteria for domination
  by a symmetric channel,'' \emph{IEEE Transactions on Information Theory},
  vol.~64, no.~8, pp. 5704--5725, 2018.

\bibitem{cover2012elements}
T.~M. Cover and J.~A. Thomas, \emph{Elements of Information Theory}.\hskip 1em
  plus 0.5em minus 0.4em\relax John Wiley \& Sons, 2012.

\bibitem{petersen2012matrix}
\BIBentryALTinterwordspacing
K.~B. Petersen and M.~S. Pedersen, \emph{The {M}atrix {C}ookbook}.\hskip 1em
  plus 0.5em minus 0.4em\relax Technical University of Denmark, 2012, version:
  November 15, 2012. [Online]. Available:
  \url{http://www2.compute.dtu.dk/pubdb/pubs/3274-full.html}
\BIBentrySTDinterwordspacing

\bibitem{kraskov2004estimating}
A.~Kraskov, H.~St{\"o}gbauer, and P.~Grassberger, ``Estimating mutual
  information,'' \emph{Physical review E}, vol.~69, no.~6, p. 066138, 2004.

\bibitem{laue2018computing}
S.~Laue, M.~Mitterreiter, and J.~Giesen, ``Computing higher order derivatives
  of matrix and tensor expressions,'' \emph{Advances in neural information
  processing systems}, vol.~31, 2018.

\bibitem{laue2020simple}
------, ``A simple and efficient tensor calculus,'' in \emph{Proceedings of the
  AAAI Conference on Artificial Intelligence}, vol.~34, no.~04, 2020, pp.
  4527--4534.

\bibitem{diamond2016cvxpy}
S.~Diamond and S.~Boyd, ``{CVXPY}: {A} {P}ython-embedded modeling language for
  convex optimization,'' \emph{Journal of Machine Learning Research}, vol.~17,
  no.~83, pp. 1--5, 2016.

\bibitem{agrawal2018rewriting}
A.~Agrawal, R.~Verschueren, S.~Diamond, and S.~Boyd, ``A rewriting system for
  convex optimization problems,'' \emph{Journal of Control and Decision},
  vol.~5, no.~1, pp. 42--60, 2018.

\bibitem{zhang2006schur}
F.~Zhang, \emph{The {S}chur complement and its applications}.\hskip 1em plus
  0.5em minus 0.4em\relax Springer Science \& Business Media, 2006, vol.~4.

\bibitem{ocpb:16}
\BIBentryALTinterwordspacing
B.~O'Donoghue, E.~Chu, N.~Parikh, and S.~Boyd, ``Conic optimization via
  operator splitting and homogeneous self-dual embedding,'' \emph{Journal of
  Optimization Theory and Applications}, vol. 169, no.~3, pp. 1042--1068, June
  2016. [Online]. Available: \url{http://stanford.edu/~boyd/papers/scs.html}
\BIBentrySTDinterwordspacing

\bibitem{scs}
------, ``{SCS}: Splitting conic solver, version 2.1.3,''
  \url{https://github.com/cvxgrp/scs}, Nov. 2019.

\end{thebibliography}

\section*{Acknowledgments}

The authors thank Christof Koch, Sanghamitra Dutta and Pulkit Grover for useful discussions.
We also thank the anonymous reviewers of previous drafts of this paper, whose comments helped us correct certain technical issues and substantially improved the presentation.

P.~Venkatesh was supported by the Shanahan Family Foundation Fellowship at the Interface of Data and Neuroscience at the Allen Institute and the University of Washington, supported in part by the Allen Institute.
G.~Schamberg was supported by the Picower Institute for Learning and Memory.
We wish to thank the Allen Institute founder, Paul G. Allen, for his vision, encouragement, and support.

\clearpage
\onecolumn
\begin{center}
	{\large\bfseries Partial Information Decomposition via Deficiency for Multivariate Gaussians} \\[4pt]
	{\normalsize Praveen Venkatesh and Gabriel Schamberg}
\end{center}

\appendices

\setlength{\abovedisplayskip}{\origabovedisplayskip}
\setlength{\belowdisplayskip}{\origbelowdisplayskip}
\setlength{\abovedisplayshortskip}{\origabovedisplayshortskip}
\setlength{\belowdisplayshortskip}{\origbelowdisplayshortskip}

\section{Miscellaneous derivations}
\label{app_misc_derivs}

\subsection{Derivation of the Closed-form Expression for the Gaussian MMI-PID from Remark~\ref{rem_gauss_mmi_pid}}

Remark~\ref{rem_gauss_mmi_pid} establishes a closed-form expression for the MMI-PID of Gaussian random variables.
First, we derive the expression for mutual information between two jointly Gaussian random vectors, as shown in equation~\eqref{eq_gauss_mmi_pid}.
\begin{align}
	I(M ; Z) &\overset{(a)}{=} h(M) - h(M \given Z) \\
	h(M) &\overset{(b)}{=} \frac{1}{2} \log \bigl( (2 \pi e)^{d_M} \mathrm{det}(\Sigma_M) \bigr) \\
	h(M \given Z) &\overset{(c)}{=} \frac{1}{2} \log \bigl( (2 \pi e)^{d_M} \mathrm{det}(\Sigma_{M|Z}) \bigr)
\end{align}
where (a) is the basic formula for mutual information~\cite[Ch.~2]{cover2012elements} and (b) is the formula for the differential entropy of a Gaussian random vector~\cite[Thm.~8.4.1]{cover2012elements}.
Here, $h(\cdot)$ represents differential entropy, while $\Sigma_M$ is the covariance matrix of $M$.
In (c), we have used the fact that $P_{M|Z}$ is also a Gaussian distribution for jointly Gaussian $P_{MZ}$, so the differential entropy follows the same formula as in (b), with the difference that we use the \emph{conditional} covariance matrix $\Sigma_{M|Z}$ corresponding to the conditional distribution $P_{M|Z}$.

Next, we observe that $\Sigma_{M|Z}$ can be written as:
\begin{equation}
	\Sigma_{M|Z} = \Sigma_M - \Sigma^{}_{MZ} \Sigma_Z^{-1} \Sigma_{MZ}^\T
\end{equation}
where $\Sigma_{MZ}$ is the cross-covariance matrix between $M$ and $Z$ (e.g., see \cite[Sec.~8.1.3]{petersen2012matrix}).

Thus, the expression for mutual information reduces to
\begin{align}
	I(M ; Z) &= \frac{1}{2} \log \bigl( (2 \pi e)^{d_M} \mathrm{det}(\Sigma_M) \bigr) - \frac{1}{2} \log \bigl( (2 \pi e)^{d_M} \mathrm{det}(\Sigma_{M|Z}) \bigr) \\
			 &= \frac{1}{2} \log \biggl( \frac{\mathrm{det}(\Sigma_M)}{\mathrm{det}(\Sigma_{M|Z})} \biggr) \\
			 &= \frac{1}{2} \log \biggl( \frac{\mathrm{det}(\Sigma_M)}{\mathrm{det}(\Sigma_M - \Sigma^{}_{MZ} \Sigma_Z^{-1} \Sigma_{MZ}^\T)} \biggr). \label{eq_gauss_mmi_pid2}
\end{align}
A different closed-form expression for the mutual information between two Gaussian random variables appears later, in Section~\ref{app_sd_of_broadcast_channels} of the Appendix. That expression has the more familiar form of $1/2 \log(1 + SNR)$, but requires a more explicit channel parameterization, which we avoid here.

We can now write out the MMI-PID for jointly Gaussian $P_{MXY}$ using the above expression.
First, we individually substitute $X$ and $Y$ for $Z$ in the RHS below, to evaluate the MMI-redundancy:
\begin{equation}
	RI_{\textup{MMI}}(M : X ; Y) = \min\{I(M ; X),I(M ; Y)\}.
\end{equation}
Then, we can compute the $MMI$-unique informations by subtracting out the redundancy from the respective mutual information:
\begin{align}
	UI_{\textup{MMI}}(M : X \setminus Y) &= I(M ; X) - RI_{\textup{MMI}}(M : X ; Y), \\
	UI_{\textup{MMI}}(M : Y \setminus X) &= I(M ; Y) - RI_{\textup{MMI}}(M : X ; Y).
\end{align}
And finally, the synergy can be computed by subtracting each of the aforementioned terms from the total mutual information:
\begin{equation}
	SI_{\textup{MMI}}(M : X ; Y) = I(M ; X, Y) - UI_{\textup{MMI}}(M : X \setminus Y) - UI_{\textup{MMI}}(M : Y \setminus X) - RI_{\textup{MMI}}(M : X ; Y).
\end{equation}
Here, the total mutual information $I(M ; X, Y)$ is computed by substituting the concatenated vector, $[X^\T, Y^\T]^\T$, in place of $Z$ in \eqref{eq_gauss_mmi_pid2}.

\subsection{Details of the Counterexample to Barrett's Result for Vector $M$}
\label{app_counterexample}

Recall the objective of the counterexample: we wish to show that when $M$ is a vector, the $\sim$-PID of Definition~\ref{def_tilde_pid} does not always reduce to the MMI-PID of Definition~\ref{def_mmi_pid}.
Consider again the setup of the counterexample: $M = [M_1, M_2]$, $X = M_1 + Z_1$ and $Y = M_2 + Z_2$, with $M_1$, $M_2$, $Z_1$, $Z_2 \sim \text{ i.i.d.\ } \mathcal N(0, 1)$.

First, we show what the MMI-PID looks like in this case:
\begin{align}
	I(M ; X) &= I(M_1, M_2 ; X) = I(M_1 ; X) + I(M_2 ; X \given M_1) \\
			 &= I(M_1 ; X) + 0,
\end{align}
since $(X, M_1) \indept M_2$.
Similarly, $I(M ; Y) = I(M_2 ; Y)$.
Now, by symmetry, we have that
\begin{equation}
	I(M_1 ; X) = I(M_1 ; M_1 + Z_1) \overset{(a)}{=} I(M_2 ; M_2 + Z_2) = I(M_2 ; Y),
\end{equation}
where (a) follows from the fact that $M_1$, $M_2$, $Z_1$ and $Z_2$ are all independent and identically distributed. Thus,
\begin{equation}
	RI_{\textup{MMI}}(M : X ; Y) = \min\{I(M ; X),I(M ; Y)\} = I(M_1 ; X) = I(M_2 ; Y).
\end{equation}
Consequently, \eqref{eq_pid_x} and \eqref{eq_pid_y} imply that $UI_{\textup{MMI}}(M : X \setminus Y) = 0$ and $UI_{\textup{MMI}}(M : Y \setminus X) = 0$.

On the other hand, the $\sim$-PID satisfies:
\begin{equation} \label{eq_indep_conds}
	X \indept Y \quad \text{and} \quad X \indept Y \given M.
\end{equation}
The latter follows because
\begin{align}
	I(X ; Y \given M) &= I(M_1 + Z_1 ; M_2 + Z_2 \given M_1, M_2) \\
	&= h(M_1 + Z_1 \given M_1, M_2) - h(M_1 + Z_1 \given M_1, M_2, M_2 + Z_2) \\
	&= h(Z_1) - h(Z_1) = 0
\end{align}
Now, \citep[Thm.~20]{bertschinger2014quantifying} and \citep[Lem.~21]{bertschinger2014quantifying} state that when the two conditions in \eqref{eq_indep_conds} hold, then $\widetilde{RI}(M : X ; Y) = 0$.
Therefore, it follows that $\widetilde{UI}_X$ and $\widetilde{UI}_Y$ are equal to $I(M ; X)$ and $I(M ; Y)$ respectively.

Thus, we see that the $\sim$-PID does not always reduce to the MMI-PID for vector $M$.

\subsection{Explanation of Remark~\ref{rem_full_rank}}

The mutual information between two random variables is invariant under invertible (and measurable) transforms applied to each of the individual variables (see, e.g., \citep[Appendix]{kraskov2004estimating}).
Changing the means of $M$, $X$ and $Y$ constitutes an invertible transformation and would not affect their mutual information.
Thus, we can assume without loss of generality that $\mathbb E[M] = 0$.

Further, we explicitly make the assumption that $\sigx$ and $\sigy$ are full rank in order to ensure that mutual informations remain finite.
For example, if $M$ and $X$ are scalar, and $X$ is the output of a unity-gain noiseless channel with $M$ as input, then we would have $X = M$ and $\sigx = \sigma_{X|M}^2 = 0$.
Thus,
\begin{equation}
	I(M ; X) = h(M) - h(M \given X) = h(M) - h(M \given M) = \infty,
\end{equation}
because the conditional differential entropy of $M$ given itself is negative infinity.
Intuitively, this happens because $M$ is a continuous random variable which contains an infinite number of (non-repeating) unknown bits in its decimal expansion.
``Giving'' $M$ tells us all of these bits, and infinitely reduces our uncertainty about $M$.

Extending this concept to vector channels, to keep mutual information finite, we require that $\sigx$ and $\sigy$ must \emph{not} be noiseless in any direction (i.e., $p_{X|M}$ and $p_{Y|M}$ cannot be degenerate), and hence must be full rank.
Technically, it suffices that $\sigx$ and $\sigy$ have no rank deficiency along \emph{those} dimensions of $X$ and $Y$ that \emph{interact} with $M$, i.e.,
\begin{align}
	v \in \mathbb R^{d_X} \text{ s.t. } \lVert v \rVert = 1, \; v^\T \sigx \, v = 0 \quad \Rightarrow \quad v^\T X \indept M.
\end{align}
However, for simplicity, we assume that $\sigx$ and $\sigy$ are entirely full rank.

\subsection{Derivation of the Expression for Gaussian Deficiency in Equation~\eqref{eq_gaussdef2}}

First, we derive the expression for the KL-divergence between two $n$-dimensional multivariate normal distributions.
For simplicity, suppose the distributions are given by $p \sim \mathcal N(\mu_1, \Sigma_1)$ and $q \sim \mathcal N(\mu_2, \Sigma_2)$.
Then,
\begin{align}
	\MoveEqLeft[2] D\bigl(p(x) \Vert q(x)\bigr) = \int p(x) \log \frac{p(x)}{q(x)} dx = \mathbb E_p \biggl[ \log \frac{p(X)}{q(X)} \biggr] \\[3pt]
	&\overset{(a)}{=} \mathbb E_p \biggl[ \log\biggl( \frac{1}{(2\pi)^{n/2} \det(\Sigma_1)^{1/2}} \exp\Bigl(-\frac{1}{2} \lVert X - \mu_1 \rVert^2_{\Sigma_1}\Bigr) \biggr)
	- \log\biggl( \frac{1}{(2\pi)^{n/2} \det(\Sigma_2)^{1/2}} \exp\Bigl(-\frac{1}{2} \lVert X - \mu_2 \rVert^2_{\Sigma_2}\Bigr) \biggr) \biggr] \\[3pt]
	&\overset{(b)}{=} \mathbb E_p \biggl[ \log \frac{\det(\Sigma_2)^{1/2}}{\det(\Sigma_1)^{1/2}} - \frac{1}{2} \lVert X - \mu_1 \rVert^2_{\Sigma_1} + \frac{1}{2} \lVert X - \mu_2 \rVert^2_{\Sigma_2} \biggr] \\[3pt]
	&\overset{(c)}{=} \frac{1}{2} \mathbb E_p \biggl[ \log \frac{\det(\Sigma_2)}{\det(\Sigma_1)} - (X - \mu_1)^\T \Sigma_1^{-1} (X - \mu_1) +  (X - \mu_2)^\T \Sigma_2^{-1} (X - \mu_2) \biggr] \\[3pt]
	&\overset{(d)}{=} \frac{1}{2} \mathbb E_p \biggl[ \log \frac{\det(\Sigma_2)}{\det(\Sigma_1)} - \Tr\{(X - \mu_1)^\T \Sigma_1^{-1} (X - \mu_1)\} + \Tr\{(X - \mu_2)^\T \Sigma_2^{-1} (X - \mu_2)\} \biggr] \\[3pt]
	&\overset{(e)}{=} \frac{1}{2} \mathbb E_p \biggl[ \log \frac{\det(\Sigma_2)}{\det(\Sigma_1)} - \Tr\{\Sigma_1^{-1} (X - \mu_1) (X - \mu_1)^\T\} + \Tr\{\Sigma_2^{-1} (X - \mu_2) (X - \mu_2)^\T\} \biggr] \\[3pt]
	&\overset{(f)}{=} \frac{1}{2} \biggl[ \log \frac{\det(\Sigma_2)}{\det(\Sigma_1)} - \Tr\bigl\{\Sigma_1^{-1} \mathbb E_p [(X - \mu_1) (X - \mu_1)^\T] \bigr\} + \Tr\bigl\{\Sigma_2^{-1} \mathbb E_p [(X - \mu_2) (X - \mu_2)^\T] \bigr\} \biggr] \\[3pt]
	&\overset{(g)}{=} \frac{1}{2} \biggl[ \log \frac{\det(\Sigma_2)}{\det(\Sigma_1)} - \Tr\bigl\{\Sigma_1^{-1} \Sigma_1^{} \bigr\} + \Tr\bigl\{\Sigma_2^{-1} \bigl( \operatorname{Var}_p[X - \mu_2] + \mathbb E_p[X - \mu_2] \mathbb E_p[X - \mu_2]^\T \bigr) \bigr\} \biggr] \\[3pt]
	&\overset{(h)}{=} \frac{1}{2} \biggl[ \log \frac{\det(\Sigma_2)}{\det(\Sigma_1)} - \Tr\{I\} + \Tr\bigl\{\Sigma_2^{-1} \bigl(\Sigma_1 + (\mu_1 - \mu_2) (\mu_1 - \mu_2)^\T \bigr) \bigr\} \biggr] \\[3pt]
	&\overset{(i)}{=} \frac{1}{2} \biggl[ \log \frac{\det(\Sigma_2)}{\det(\Sigma_1)} - n + \Tr\bigl\{\Sigma_2^{-1} \Sigma_1 + \Sigma_2^{-1} (\mu_1 - \mu_2) (\mu_1 - \mu_2)^\T \bigr\} \biggr] \\[3pt]
	&\overset{(j)}{=} \frac{1}{2} \biggl[ \log \frac{\det(\Sigma_2)}{\det(\Sigma_1)} - n + \Tr\bigl\{\Sigma_2^{-1} \Sigma_1 \bigr\} + (\mu_1 - \mu_2)^\T \Sigma_2^{-1} (\mu_1 - \mu_2) \bigr\} \biggr] \\[3pt]
	&\overset{(k)}{=} \frac{1}{2} \biggl[ \log \frac{\det(\Sigma_2)}{\det(\Sigma_1)} - n + \Tr\bigl\{\Sigma_2^{-1} \Sigma_1 \bigr\} + \lVert \mu_1 - \mu_2 \rVert_{\Sigma_2} \bigr\} \biggr] \label{eq_gauss_kldiv}
\end{align}
where in the above steps, we have:
\begin{enumerate}[label=(\alph*)]
	\item Expanded out the Gaussian distribution
	\item Expanded out the logs and canceled terms
	\item Expanded the Mahalanobis distance, $\Vert a \Vert_B^2 \coloneqq a B^{-1} a^\T$
	\item Used the fact that a scalar is equal to the trace of itself
	\item Used the fact that the trace is invariant under cyclic permutations
	\item Used linearity of trace to take the expectation inside the operator
	\item Noted that $\mathbb E_p \bigl[(X - \mu_1) (X - \mu_1)^\T\bigr]$ is simply $\Sigma_1$, and used the formula for variance in the last term.
	\item Resolved the variance and expectation, taken with respect to $p$
	\item Used the fact that the trace of the identity matrix is equal to the dimension of the matrix, $n$
	\item Used the cyclic-permutation invariance of trace again
	\item Used the formula for Mahalanobis distance again
\end{enumerate}
The expression for Gaussian deficiency follows by observing that $p = \mathcal N(\hy M, \, \sigy)$ and $q = \mathcal N(T \hx M, \, T \sigx T^\T + \sigt)$.
Further, we make use of Remark~\ref{rem_whitening} to set $\sigx = I$ and $\sigy = I$, so that
\begin{align}
	\mu_1 &= \hy M & \Sigma_1 &= I \\
	\mu_2 &= T \hx M & \Sigma_2 &= \sigt + TT^\T
\end{align}
Finally, substituting into \eqref{eq_gauss_kldiv}, we get
\begin{align}
	\MoveEqLeft[2] \mathbb{E}_{P_{M}}\left[D(P_{Y\mid M} \,\Vert\, P_{Y'|X}\circ P_{X\mid M})  \right] \notag \\
	&= \frac{1}{2}\biggl[\mathbb{E}_{P_M}\Bigl[\bigl\lVert(T\hx-\hy)M\bigr\rVert_{\sigt + TT^\T}^2\Bigr] + \operatorname{Tr}\left\{(\sigt + TT^\T)^{-1}\right\} + \log \mathrm{det}(\sigt + TT^\T) - d_Y\biggr],
\end{align}
where we have used the fact that $\det(I) = 1$ and $n$ here is $d_Y$.

\section{Proof of Theorem~\ref{thm_gerdes}}
\label{app_thm_gerdes}

\noindent The proof of Theorem~\ref{thm_gerdes} consists of two steps:
\begin{enumerate}
	\item First, we show that the concept of Blackwell sufficiency (Definition~\ref{def_blackwell}) is identical to another concept known as stochastic degradedness (defined formally below) in the literature on broadcast channels.
		We show this in Lemma~\ref{lem_bs_sd_equiv}.
	\item Then, we leverage previous work from the literature characterizing the stochastic degradedness of Gaussian MIMO broadcast channels.
\end{enumerate}
Finally, we provide an explanation of what is expressed in footnote~\ref{fn_assm_star} in a subsection at the end of this section of the Appendix.

\subsection{The Equivalence of Blackwell Sufficiency and Stochastic Degradedness}

\begin{definition}[Stochastic degradedness]
	We say that a channel $P_{Y|M}$ is \emph{stochastically degraded} with respect to another channel $P_{X|M}$ if there exists a random variable $X'$ with sample space $\sampx$ such that $P_{X'|M} = P_{X|M}$ and $M$---$X'$---$Y$ is a Markov chain.
\end{definition}
\vspace{6pt}
\begin{lemma}[Equivalence of Blackwell sufficiency and stochastic degradedness] \label{lem_bs_sd_equiv}
	$P_{X|M}$ is Blackwell sufficient for $P_{Y|M}$ if and only if $P_{Y|M}$ is stochastically degraded with respect to $P_{X|M}$.
\end{lemma}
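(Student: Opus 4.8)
The statement to prove is Lemma~\ref{lem_bs_sd_equiv}: $P_{X|M}$ is Blackwell sufficient for $P_{Y|M}$ if and only if $P_{Y|M}$ is stochastically degraded with respect to $P_{X|M}$. The plan is to show that both notions unpack to the \emph{same} condition: the existence of a channel $P_{Y'|X}$ that, when composed with $P_{X|M}$, reproduces $P_{Y|M}$. This is essentially a matter of carefully matching definitions, so I expect the proof to be short.

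First, I would spell out the ``only if'' direction. Assume $X \suff_M Y$; by Definition~\ref{def_blackwell} there is a channel $P_{Y'|X} \in \mathcal C(\sampy \given \sampx)$ with $P_{Y'|X} \circ P_{X|M} = P_{Y|M}$. I then construct the random variable $X'$ required by stochastic degradedness: take $X'$ to have the same sample space $\sampx$, set $P_{X'|M} \coloneqq P_{X|M}$, and define the joint law of $(M, X', Y)$ by drawing $M \sim P_M$, then $X' \mid M \sim P_{X|M}$, then $Y \mid X' \sim P_{Y'|X}$ (independently of $M$ given $X'$). By construction $M$--$X'$--$Y$ is a Markov chain, $P_{X'|M} = P_{X|M}$, and the induced conditional $P_{Y|M}$ equals $P_{Y'|X} \circ P_{X|M} = P_{Y|M}$, matching the original channel from $M$ to $Y$. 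Hence $P_{Y|M}$ is stochastically degraded w.r.t.\ $P_{X|M}$.

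For the ``if'' direction, assume $P_{Y|M}$ is stochastically degraded w.r.t.\ $P_{X|M}$: there is $X'$ with $P_{X'|M} = P_{X|M}$ and $M$--$X'$--$Y$ Markov. The Markov property means $P_{Y|X'M} = P_{Y|X'}$; set $P_{Y'|X} \coloneqq P_{Y|X'}$, viewed as a channel from $\sampx$ to $\sampy$ (legitimate since $X'$ takes values in $\sampx$). Then, marginalizing out $X'$ and using the Markov chain, $(P_{Y'|X} \circ P_{X|M})(y \mid m) = \int P_{Y|X'}(y \mid x) P_{X'|M}(x \mid m)\,dx = P_{Y|M}(y \mid m)$, so \eqref{eq_blackwell_suff} holds and $X \suff_M Y$. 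One measure-theoretic caveat is whether the conditional distributions $P_{Y|X'}$ exist as proper regular conditional probabilities; on the standard (Polish / Borel) sample spaces assumed implicitly here this is automatic, so I would note it in passing rather than dwell on it.

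The main obstacle, such as it is, is purely bookkeeping: being careful that the object asserted to exist in one definition (a channel $P_{Y'|X}$, respectively an auxiliary variable $X'$) is correctly repackaged into the object demanded by the other, and that ``channel composition'' as defined in the excerpt coincides with ``marginalizing a Markov chain.'' There is no real analytic content. I would therefore present the argument as two short paragraphs mirroring the two directions above, making explicit the identification $P_{Y'|X} \leftrightarrow P_{Y|X'}$ and the construction of the joint law, and leaving the regularity of conditional distributions as a one-line remark.
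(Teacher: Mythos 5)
Your proof is correct and matches the paper's argument: the degraded-implies-sufficient direction is identical, and for the converse the paper builds the same joint law of $(M,X',Y)$ that you do, only ``backwards,'' defining $X'$ as a randomization of the original $(Y,M)$ via $P_{X'|YM} \coloneqq P_{X|Y'M}$ so that $X'$ is explicitly realized on the original probability space alongside $Y$. Since your forward construction reproduces the original $(M,Y)$-marginal, the two constructions give the same joint distribution, and the measure-theoretic caveat you flag (existence of regular conditional probabilities) is the same one the paper relegates to a footnote.
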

\vspace{-6pt}
\begin{proof}[\indent Proof]%
    Throughout this proof, we drop the arguments of probability distributions.
	When we equate two distributions, we mean that they are identical at all points in their shared domain.%
	\footnote{This can be extended to be more measure-theoretically accurate, to mean that they are equal $\mu$-almost everywhere under some measure $\mu$ that is absolutely continuous with respect to the two distributions being equated.}

	$(\Leftarrow)$ Suppose $P_{Y|M}$ is stochastically degraded w.r.t.\ $P_{X|M}$.
	Then, $\exists\; X'$ such that
	\begin{equation} \label{eq_stoch_mc}
		P_{X'|M} = P_{X|M} \text{ and } P_{YX'|M} = P_{Y|X'} P_{X'|M}.
	\end{equation}
	Let $P_{Y'|X}(y \given x) \coloneqq P_{Y|X'}(y \given x) \;\forall\; x \in \sampx$, $y \in \sampy$.
	Then, \eqref{eq_stoch_mc} implies that
	\begin{align}
		P_{Y'|X} P_{X|M} &= P_{Y|X'} P_{X'|M}.
		\shortintertext{Therefore,}
		\int P_{Y'|X} P_{X|M} \, dx &= \int P_{Y|X'} P_{X'|M} \, dx \\
									&= \int P_{YX'|M} \, dx \;=\; P_{Y|M},
	\end{align}
	which proves that $P_{X|M}$ is Blackwell sufficient for $P_{Y|M}$.

	$(\Rightarrow)$ Suppose $P_{X|M}$ is Blackwell sufficient for $P_{Y|M}$.
	Then, $\exists\; P_{Y'|X}$ such that
	\begin{equation}
		\int P_{Y'|X} P_{X|M} \, dx = P_{Y|M}.
	\end{equation}
	In other words,%
	\begin{equation} \label{eq_blk_channel}
		P_{Y'|M} = P_{Y|M} \;\text{ and hence }\; P_{Y'M} = P_{YM}.
	\end{equation}
	Let $X'$ be defined through a stochastic transformation of $Y$ and $M$: $P_{X'|YM}(x|y,m) \coloneqq P_{X|Y'M}(x|y,m) \;\forall\; x \in \sampx$, $y \in \sampy$, $m \in \sampm$.
	Then, using \eqref{eq_blk_channel}, we find
	\begin{align}
		P_{X'YM} &= P_{X'|YM} P_{YM} \\
				 &= P_{X|Y'M} P_{Y'M} = P_{XY'M}. \label{eq_prime_switch}
	\end{align}
	This in turn implies
	\begin{align}
		P_{X'Y|M} = P_{XY'|M} &\overset{(a)}{=} P_{Y'|X} P_{X|M} \\
							  &\overset{(b)}{=} P_{Y|X'} P_{X'|M},
	\end{align}
	which proves that $M$---$X'$---$Y$ is a Markov chain. In the above equation, (a) follows from how $P_{Y'|X}$ is defined, while (b) follows from~\eqref{eq_prime_switch}.
\end{proof}
\vspace{-6pt}
\begin{remark}
	The equivalence stated in Lemma~\ref{lem_bs_sd_equiv} was also mentioned in passing by Raginsky~\cite{raginsky2011shannon}, though without proof.
	On the other hand, Cover and Thomas~\cite[Sec.~15.6.2]{cover2012elements} \emph{define} stochastic degradedness similar to how we have defined Blackwell sufficiency.
	However, since we refer to the work of Gerdes et al.~\cite{gerdes2015equivalence} in what follows, we have used their definition of stochastic degradedness, and proved the equivalence formally for the sake of completeness.
\end{remark}

\subsection{Characterizing the Stochastic Degradedness of Broadcast Channels}
\label{app_sd_of_broadcast_channels}

Before proceeding to the proof of Theorem~\ref{thm_gerdes}, we introduce a lemma borrowed from Shang and Poor~\cite[Lemma~5]{shang2012noisy} that provides an equivalent characterization of the condition in Theorem~\ref{thm_gerdes}.
\begin{lemma} \label{lem_shang_poor} The condition $\hx^\T \sigx^{-1} \hx \suff \hy^\T \sigy^{-1} \hy$ holds if and only if
	\begin{equation}
		\exists\; T \quad\text{s.t.}\quad \hy = T\hx \;\text{ and }\; T \sigx T^\T \ffus \sigy.
	\end{equation}
\end{lemma}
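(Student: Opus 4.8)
\textbf{Proof proposal for Lemma~\ref{lem_shang_poor}.}

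The plan is to prove the equivalence by manipulating the positive-semidefinite condition into a factorization statement, working in the whitened coordinates of Remark~\ref{rem_whitening} where $\sigx = \sigy = I$ (or, equivalently, absorbing the noise covariances into the channel matrices). After whitening, the claim reads: $\widetilde H_X^\T \widetilde H_X \suff \widetilde H_Y^\T \widetilde H_Y$ if and only if there exists $T$ with $\widetilde H_Y = T \widetilde H_X$ and $TT^\T \ffus I$. I would then translate back to the un-whitened statement at the end, since the substitutions $\widetilde H_X = \hx$, $\widetilde H_Y = \hy$ (post-whitening) and the relabeled $T$ carry the general form.

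For the ``if'' direction ($\Leftarrow$), suppose $\hy = T\hx$ and $T\sigx T^\T \ffus \sigy$. Then $\hy^\T \sigy^{-1} \hy = \hx^\T T^\T \sigy^{-1} T \hx$, so it suffices to show $\hx^\T \sigx^{-1} \hx \suff \hx^\T T^\T \sigy^{-1} T \hx$, which follows if $\sigx^{-1} \suff T^\T \sigy^{-1} T$ as quadratic forms restricted to the row space of $\hx$ — and in fact $\sigx^{-1} \suff T^\T \sigy^{-1} T$ holds globally: it is equivalent (conjugating by $\sigx^{1/2}$ and using operator monotonicity of inversion) to $\sigy \suff T \sigx T^\T$, which is the hypothesis. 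So this direction is a short computation using the fact that $A \suff B \Rightarrow C^\T A C \suff C^\T B C$ and that $A \suff B \succ 0 \Rightarrow B^{-1} \suff A^{-1}$.

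For the ``only if'' direction ($\Rightarrow$), which I expect to be the main obstacle, suppose $\hx^\T \sigx^{-1} \hx \suff \hy^\T \sigy^{-1} \hy$. I must produce a single $T$ simultaneously achieving the exact equality $\hy = T\hx$ and the inequality $T\sigx T^\T \ffus \sigy$. The natural candidate is something like $T = \hy (\hx^\T \sigx^{-1} \hx)^{+} \hx^\T \sigx^{-1}$ (a weighted pseudo-inverse construction), where $(\cdot)^+$ denotes the Moore–Penrose pseudo-inverse. The key subtlety is the \emph{null-space compatibility}: the PSD ordering $\hx^\T \sigx^{-1} \hx \suff \hy^\T \sigy^{-1} \hy$ forces $\ker(\hx^\T \sigx^{-1} \hx) \subseteq \ker(\hy^\T \sigy^{-1} \hy)$, hence $\ker(\hx) \subseteq \ker(\hy)$ (since $\sigx, \sigy \succ 0$), which is exactly what is needed for $\hy = T\hx$ to have a solution and for the chosen $T$ to reproduce $\hy$ on the row space of $\hx$ while vanishing on its complement. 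Then verifying $T\sigx T^\T \ffus \sigy$ reduces, after whitening, to showing that a projection-type matrix is dominated by $I$, which follows from the eigenvalue information encoded in the original PSD inequality. Since the paper cites this as Lemma~5 of Shang–Poor~\cite{shang2012noisy}, I would present the forward direction via this explicit pseudo-inverse construction and the careful null-space argument, and cite~\cite{shang2012noisy} for the detailed verification, noting that the construction can be reduced to the simultaneous-diagonalization normal form for a pair of PSD matrices (write $\hx^\T\sigx^{-1}\hx$ and $\hy^\T\sigy^{-1}\hy$ in a common basis) to make the inequality $T\sigx T^\T \ffus \sigy$ transparent.
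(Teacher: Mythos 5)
Your proposal is correct and follows essentially the same route as the paper, which simply whitens the channels (Remark~\ref{rem_whitening}) to reduce the claim to $\widetilde H_X^\T\widetilde H_X \suff \widetilde H_Y^\T\widetilde H_Y$ iff $\exists\, T$ with $\widetilde H_Y = T\widetilde H_X$ and $TT^\T \ffus I$, and then defers entirely to Shang--Poor's Lemma~5; you actually supply more of the substantive argument than the paper does, and your pseudo-inverse construction and kernel-containment argument for the forward direction are sound. The one step to tighten is your justification of $\sigy \suff T\sigx T^\T \Leftrightarrow \sigx^{-1} \suff T^\T\sigy^{-1}T$: for rectangular $T$ one cannot literally ``conjugate and invert,'' but both conditions are the two Schur-complement characterizations of positive semidefiniteness of the block matrix with diagonal blocks $\sigy$ and $\sigx^{-1}$ and off-diagonal block $T$, so the equivalence holds and the rest of your computation goes through.
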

\vspace{-6pt}
\begin{proof}[\indent Proof]
    Consider the whitened form of the channels:
    \begin{equation}
        \widetilde H_X \coloneqq \sigx^{-\frac{1}{2}}\hx,\, \widetilde H_Y \coloneqq \sigy^{-\frac{1}{2}}\hy,
    \end{equation}
	as mentioned in Remark~\ref{rem_whitening}.
    Then, we need to show $\widetilde H_X^\T \widetilde H_X \suff \widetilde H_Y^\T \widetilde H_Y$ if and only if
    \begin{equation}
        \exists\; T \quad \text{s.t.} \quad \widetilde H_Y = T \widetilde H_X \quad \text{and} \quad TT^\T \ffus I.
    \end{equation}
    The remainder of the proof follows from~\cite[Lemma~5]{shang2012noisy}.
\end{proof}
\begin{proof}[\indent Proof of Theorem~\ref{thm_gerdes}]
	The statement of Theorem~\ref{thm_gerdes} specifies a necessary and sufficient condition for $X \suff_M Y$ for multivariate Gaussian $P_{MXY}$.
	However, thanks to Lemma~\ref{lem_bs_sd_equiv}, the same condition is necessary and sufficient for $P_{Y|M}$ being stochastically degraded with respect to $P_{X|M}$, and so this is what we proceed to show.
	This proof is derived in large part from the work of Gerdes et al.~\cite{gerdes2015equivalence}.

	$(\Leftarrow)$ Suppose that $\hx^\T \sigx^{-1} \hx \suff \hy^\T \sigy^{-1} \hy$. Then, by Lemma~\ref{lem_shang_poor}, $\exists\; T$ such that $\hy = T\hx$ and $T \sigx T^\T \ffus \sigy$. In other words, we may write
	\begin{equation}
		Y' = TX + N,
	\end{equation}
	where $N \sim \mathcal N(0, \sigt)$, with $\sigt \coloneqq \sigy - T \sigx T^\T$ (the fact that $T \sigx T^\T \ffus \sigy$ ensures that $\sigt$ is positive semidefinite, and hence a valid covariance matrix). Therefore, $\exists\; Y'$ generated by a stochastic transformation $P_{Y'|X} = \mathcal N(TX,\, \sigt)$, such that
	\begin{equation}
		P_{Y'|M} = \mathcal N(T \hx M,\, T \sigx T^\T + \sigt)
				 = \mathcal N(\hy M,\, \sigy) = P_{Y|M}.
	\end{equation}
	Hence, by Definition~\ref{def_blackwell}, $X \suff_M Y$.

	$(\Rightarrow)$ Next, suppose that $X \suff_M Y$. Then, $\exists\; Y'$ generated by $P_{Y'|X}$, such that $P_{Y'|M} = P_{Y|M}$. Since $M$---$X$---$Y'$ is a Markov chain,
	\begin{equation} \label{eq_dpi}
		I(M; X) \overset{(a)}{\geq} I(M; Y') \overset{(b)}{=} I(M; Y),
	\end{equation}
	where (a) follows from the Data Processing Inequality~\cite[Ch.~2]{cover2012elements}, and (b) follows because $P_{Y'M} = P_{YM}$. For a Gaussian channel $P_{X|M}$, the mutual information is given by~\cite[Thm.~8.4.1]{cover2012elements}
	\begin{align}
		I(M; X) &= h(X) - h(X \given M) \\
				&= \frac{1}{2} \log\det\bigl( (2 \pi e)(\sigx + \hx \sigm \hx^\T) \bigr)  - \frac{1}{2} \log\det( 2 \pi e \sigx ) \\
				&= \frac{1}{2} \log\biggl(\frac{\det( \sigx + \hx \sigm \hx^\T )}{\det( \sigx )}\biggr) \\
				&= \frac{1}{2} \log\biggl(\frac{\det( \sigx ) \det( I + \sigx^{-1} \hx \sigm \hx^\T )}{\det( \sigx )}\biggr) \\
				&= \frac{1}{2} \log\det( I + \sigx^{-1} \hx \sigm \hx^\T ).
	\end{align}
	Now, suppose for the sake of contradiction that $\hx^\T \sigx^{-1} \hx \nsuff \hy^\T \sigy^{-1} \hy$.
	Then, by the definition of positive semidefiniteness, $\exists\; c \in \mathbb R^{d_M}$ such that
	\begin{equation}
		c^\T \hx^\T \sigx^{-1} \hx c < c^\T \hy^\T \sigy^{-1} \hy c.
	\end{equation}
	Since $\log\det( I + AB ) = \log\det( I + BA )$, $\log(\cdot)$ is an increasing function, and the determinant of a scalar is equal to itself, we have that
	\begin{align}
		\hspace{-2mm}\log\det( 1 + c^\T \hx^\T \sigx^{-1} \hx c ) &< \log\det( 1 + c^\T \hy^\T \sigy^{-1} \hy c ) \\
		\hspace{-2mm}\Rightarrow \log\det( I + \sigx^{-1} \hx c c^\T \hx^\T ) &< \log\det( I + \sigy^{-1} \hy c c^\T \hy^\T )
	\end{align}
	Now, since $cc^\T \suff 0$, it is a valid covariance matrix. So if we set $\sigm \coloneqq cc^\T$, we get
	\begin{align}
		I(M ; X) &= \log\det( I + \sigx^{-1} \hx \sigm \hx^\T ) \\
				 &< \log\det( I + \sigy^{-1} \hy \sigm \hy^\T ) \\
				 &= I(M ; Y).
	\end{align}
	However, this contradicts~\eqref{eq_dpi}, which holds no matter what $\sigm$ is. Therefore, we must have $\hx^\T \sigx^{-1} \hx \suff \hy^\T \sigy^{-1} \hy$.
\end{proof}
\vspace{-6pt}

\subsection{Explanation of Footnote~\ref{fn_assm_star}: The Difference Between PIDs Satisfying Assumption $(*)$ and Blackwellian PIDs}

In footnote~\ref{fn_assm_star}, we noted that Barrett's result applied to all PIDs satisfying Assumption~\eqref{eq_assm_star} from Bertschinger et al.~\citep{bertschinger2014quantifying}.
On the other hand, we consider all Blackwellian PIDs when showing the extension of Barrett's result.
Here, we discuss the commonalities and differences between these two sets of PID definitions.

First, we formally restate Assumption~\eqref{eq_assm_star} as given in Bertschinger et al.~\citep{bertschinger2014quantifying}:
\begin{equation} \label{eq_assm_star}
	\ui{X}{Y} \text{ is a function of only } P_M, P_{X|M} \text{ and } P_{Y|M}, \hfill \tag{$*$}
\end{equation}
as opposed to being a function of the complete joint distribution $P_{MXY}$.
Barrett's main result showed that for jointly Gaussian $P_{MXY}$ with scalar $M$, all PIDs that satisfied the above Assumption~\eqref{eq_assm_star} reduced to the MMI-PID.
However, we show a seemingly different result: Theorem~\ref{thm_gerdes} states that for fully multivariate Gaussian $P_{MXY}$ satisfying equation~\eqref{eq_gauss_bs_cond}, all \emph{Blackwellian} PIDs reduce to the MMI-PID.
A natural question that arises is: what is the relationship between Blackwellian PIDs and PIDs that satisfy Assumption~\eqref{eq_assm_star}?

In fact, PIDs that satisfy Assumption~\eqref{eq_assm_star} and PIDs that are Blackwellian form two distinct and unrelated sets.
All Blackwellian PIDs need not satisfy Assumption~\eqref{eq_assm_star}, and all PIDs that satisfy Assumption~\eqref{eq_assm_star} need not be Blackwellian.

By virtue of how it is defined, $\widetilde{UI}$ upper bounds all PIDs that satisfy Assumption~\eqref{eq_assm_star}~(see \cite[Lem.~3]{bertschinger2014quantifying}).
Thus, if $UI^{(*)}$ is an arbitrary PID that satisfies Assumption~\eqref{eq_assm_star}, then whenever $\widetilde{UI}_X$ goes to zero, $UI^{(*)}_X$ must also go to zero.
Crucially, however, the converse does not hold: $\widetilde{UI}_X > 0$ does \emph{not} imply that $UI^{(*)}_X$ must also be \emph{non}-zero.%
\footnote{A simple example might be the MMI-PID, which satisfies Assumption~\eqref{eq_assm_star}, and for which at least one of $UI_X$ or $UI_Y$ must be zero.
However, there are always instances when $\widetilde{UI}_X > 0$ and $\widetilde{UI}_Y > 0$, such as the counterexample shown in Appendix~\ref{app_counterexample}.}
This feature is what distinguishes PIDs satisfying Assumption~\eqref{eq_assm_star} from Blackwellian PIDs, as explained below.

By definition, all Blackwellian PIDs have $UI_X$ going to zero \emph{if and only if} $Y \suff_M X$, and thus they all have $UI_X$ going to zero \emph{together}.
Since $\widetilde{UI}$ is Blackwellian~\citep{bertschinger2014quantifying}, for any arbitrary Blackwellian PID $UI^{\text{Bw}}$, whenever $\widetilde{UI}$ goes to zero, $UI^{\text{Bw}}$ also goes to zero, and further, whenever $\widetilde{UI}$ is non-zero, $UI^{\text{Bw}}$ is also non-zero.

We can summarize these relationships as:
\begin{equation}
	UI^\text{Bw}_X = 0 \quad \Leftrightarrow \quad \widetilde{UI}_X = 0 \quad \Rightarrow \quad UI^{(*)}_X = 0
\end{equation}
Therefore, our extension of Barrett's result (Corollary~\ref{cor_gauss_pid}) can also be stated for PIDs satisfying Assumption~\eqref{eq_assm_star} rather than Blackwellian PIDs (since Corollary~\ref{cor_gauss_pid} is effectively a one-way implication).
However, while Theorem~\ref{thm_gerdes} can also be written (for jointly Gaussian $P_{MXY}$) as
\begin{equation}
	UI^{\text{Bw}}_Y = 0 \quad \Leftrightarrow \quad \hx^\T \sigx^{-1} \hx \suff \hy^\T \sigy^{-1} \hy,
\end{equation}
we cannot write the same with $UI^{(*)}_Y = 0$ in place of $UI^{\text{Bw}}_Y = 0$, since the forward implication ($\Rightarrow$) would not always hold.

\section{Justification for Our Convex Approximation Formulation}
\label{app_justification}

First, recall the expression for Gaussian deficiency from equation~\eqref{eq_gaussdef2}, which we are trying to approximate through a convex objective:
\begin{equation} \label{eq_gaussdef_app}
    \dfg{Y}{X} = \inf_{T,\sigt\suff 0} \frac{1}{2}\biggl[\begin{aligned}[t]
			&\ \mathbb{E}_{P_M}\Bigl[\bigl\lVert(T\hx-\hy)M\bigr\rVert_{\sigt + TT^\T}^2\Bigr] \\
			&+ \operatorname{Tr}\left\{(\sigt + TT^\T)^{-1}\right\}
			+ \log\det(\sigt + TT^\T) - d_Y
		\biggr]\end{aligned}
\end{equation}
As stated in the main text, the derivation of our proposed convex approximation has three main steps:
\begin{enumerate}
    \item First, we obtain a condition on $\sigt$ (in terms of $T$), which is locally optimal. This condition allows us to reduce the optimization problem from two variables ($\sigt$ and $T$) to just one ($T$~alone). This step is \emph{exact}, and is shown in Proposition~\ref{prop_gaussdef_simplifcn}.
    \item Next, we reinterpret and approximate the objective to significantly simplify its functional form, while attempting to minimize the same entities.
    This objective is designed to specifically to ensure that the approximate deficiency recovers the true Gaussian deficiency at its extremal values (i.e. when $\delta_G$ is equal to zero, or equal to the mutual information).
    \item Lastly, as part of reducing the optimization problem to a single variable, the constraint $\sigt \suff 0$ is now replaced by a positive semidefiniteness constraint in terms of $T$. We approximate and simplify this constraint to make it more amenable for a convex program to handle.
\end{enumerate}
We first state Proposition~\ref{prop_gaussdef_simplifcn}, which encapsulates segments of the simplification process that are \emph{exact}. The parts of the simplification that actually constitute approximations are described immediately afterward.

\begin{proposition} \label{prop_gaussdef_simplifcn}
    The Gaussian deficiency, as given by equation \eqref{eq_gaussdef_app}, finds a local minimum at
    \begin{equation}
        \sigt = I - TT^\T + (\hy - T\hx) \sigm (\hy - T\hx)^\T.
    \end{equation}
    Substituting this expression back into the optimization problem significantly simplifies the expression for the Gaussian deficiency:
    \begin{equation}
		\begin{gathered}
            \dfg{Y}{X} \; = \; \inf_{T} \; \log\det( I + (\hy - T\hx) \sigm (\hy - T\hx)^\T ) \\
            \textup{s.t.} \quad I - TT^\T + (\hy - T\hx) \sigm (\hy - T\hx)^\T \suff 0
        \end{gathered}
    \end{equation}
\end{proposition}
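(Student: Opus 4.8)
\textbf{Proof plan for Proposition~\ref{prop_gaussdef_simplifcn}.}

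The plan is to treat \eqref{eq_gaussdef_app} as an unconstrained (for the moment) minimization over $\sigt$ for fixed $T$, find a stationary point, verify it is a local minimum, substitute it back, and simplify. First I would introduce the shorthand $S \coloneqq \sigt + TT^\T$ (a positive definite matrix, since $\sigt \suff 0$ and, generically, $TT^\T$ together make $S$ invertible; this is exactly the regime where the objective is finite) and also write $\Delta \coloneqq \hy - T\hx$, so that $\mathbb{E}_{P_M}\!\bigl[\lVert \Delta M\rVert_S^2\bigr] = \operatorname{Tr}\{S^{-1}\Delta\sigm\Delta^\T\}$ by the standard identity $\mathbb{E}[\lVert \Delta M\rVert_S^2] = \operatorname{Tr}\{S^{-1}\Delta\,\mathbb{E}[MM^\T]\,\Delta^\T\}$ with $\mathbb{E}[MM^\T]=\sigm$. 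In these variables the objective (up to the factor $1/2$ and the constant $-d_Y$) reads $f(S) = \operatorname{Tr}\{S^{-1}(I + \Delta\sigm\Delta^\T)\} + \log\det S$, a well-known convex function of $S \succ 0$ whose unique minimizer is found by setting the matrix gradient to zero.

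Next I would differentiate: using $\nabla_S \operatorname{Tr}\{S^{-1}A\} = -S^{-1}AS^{-1}$ and $\nabla_S \log\det S = S^{-1}$, the first-order condition $-S^{-1}(I+\Delta\sigm\Delta^\T)S^{-1} + S^{-1} = 0$ gives $S = I + \Delta\sigm\Delta^\T$, i.e.
\begin{equation}
    \sigt = I - TT^\T + (\hy - T\hx)\sigm(\hy - T\hx)^\T,
\end{equation}
which is the claimed stationary point. Convexity of $f$ (or a direct second-derivative check) certifies this is a local—indeed global in $S$—minimum; I would remark that this is why the proposition only claims a \emph{local} minimum of the joint problem, since the change of variables $S = \sigt + TT^\T$ is not globally benign in $(T,\sigt)$ and the constraint $\sigt \suff 0$ has been temporarily dropped. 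Substituting $S = I + \Delta\sigm\Delta^\T$ back: $\operatorname{Tr}\{S^{-1}S\} = d_Y$ cancels the $-d_Y$, and $\log\det S = \log\det(I + (\hy-T\hx)\sigm(\hy-T\hx)^\T)$ is all that remains, yielding the simplified objective. Finally, the constraint $\sigt \suff 0$ becomes exactly $I - TT^\T + (\hy - T\hx)\sigm(\hy-T\hx)^\T \suff 0$ once the optimal $\sigt$ is inserted, which gives the constrained single-variable program in the statement.

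The main obstacle I anticipate is not the calculus but the bookkeeping around \emph{which} minimum one is entitled to claim: one must be careful that the inner minimization over $\sigt$ for fixed $T$ genuinely has its optimum at an interior point (so that the unconstrained stationarity condition is legitimate), and that reinserting this optimal $\sigt$ preserves the value of the joint infimum—hence the careful phrasing ``finds a local minimum at'' rather than asserting global optimality of the reduced problem. I would also need to state cleanly the moment identity $\mathbb{E}_{P_M}[\lVert \Delta M\rVert_S^2] = \operatorname{Tr}\{S^{-1}\Delta\sigm\Delta^\T\}$ and the whitening convention $\sigx = \sigy = I$ from Remark~\ref{rem_whitening} that makes the ``$I$'' appear in place of $\sigy$; neither is difficult, but both are load-bearing for the clean final form.
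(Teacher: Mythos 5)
Your overall route is the same as the paper's: reduce the expectation to a trace via $\mathbb{E}_{P_M}[\lVert \Delta M\rVert_S^2] = \operatorname{Tr}\{S^{-1}\Delta\sigm\Delta^\T\}$ with $\Delta = \hy - T\hx$, find the stationary point of the objective in the noise covariance for fixed $T$, observe that it forces $\sigt + TT^\T = I + \Delta\sigm\Delta^\T$, and substitute back so that the trace term cancels $d_Y$ and only the $\log\det$ survives, with $\sigt \suff 0$ turning into the stated constraint on $T$. Your change of variables $S = \sigt + TT^\T$ is a genuine (if small) streamlining: the paper instead differentiates with respect to the symmetric matrix $\sigt$ directly and has to carry the Hadamard-product correction term for symmetric-matrix derivatives, which your affine reparameterization sidesteps entirely.

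One claim in your argument is false as stated: $f(S) = \operatorname{Tr}\{S^{-1}A\} + \log\det S$ is \emph{not} convex in $S \succ 0$ (the first term is convex but $\log\det$ is concave; already in the scalar case $f(s) = a/s + \log s$ one has $f''(s) = (2a - s)/s^3 < 0$ for $s > 2a$). So convexity cannot certify the minimum. The conclusion you want is still true --- $S = A$ with $A = I + \Delta\sigm\Delta^\T$ is the unique global minimizer of $f$ over $S \succ 0$ --- but the correct justification is that $f(S) - f(A) = 2\,D\bigl(\mathcal N(0,A)\,\Vert\,\mathcal N(0,S)\bigr) \ge 0$ with equality iff $S = A$, or alternatively a second-order check at the unique stationary point together with the observation that $f \to \infty$ both at the boundary of the cone and at infinity. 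With that repair your proof goes through; the paper itself only verifies first-order stationarity and does not certify even a local minimum, so your corrected argument would actually be slightly stronger on this point. Your caveats about interiority of the optimal $\sigt$, and about whether reinserting it preserves the value of the joint infimum when the unconstrained optimizer violates $\sigt \suff 0$, are well placed --- the paper glosses over the same issue, which is precisely why the proposition is phrased in terms of a local minimum.
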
%
\vspace{-6pt}%
\begin{proof}[\indent Proof]
The expression in equation~\eqref{eq_gaussdef_app} can be simplified into a more manageable form, starting with the expectation:
\begin{align}
    \MoveEqLeft \mathbb{E}_{P_M}\Bigl[ \bigl\lVert (\hy - T\hx) M \bigr\rVert_{\sigt + TT^\T}^2 \Bigr] \notag \\
    &\overset{(a)}{=} \mathbb{E}_{P_M}\Bigl[ M^\T (\hy - T\hx)^\T (\sigt + TT^\T)^{-1} (\hy - T\hx) M \Bigr] \\
    &\overset{(b)}{=} \mathbb{E}_{P_M}\Bigl[ \operatorname{Tr} \bigl\{ M^\T (\hy - T\hx)^\T (\sigt + TT^\T)^{-1} (\hy - T\hx) M \bigr\}\Bigr] \\
    &\overset{(c)}{=} \mathbb{E}_{P_M}\Bigl[ \operatorname{Tr}\bigl\{ (\sigt + TT^\T)^{-1} (\hy - T\hx) M M^\T (\hy - T\hx)^\T \bigr\}\Bigr] \\
    &\overset{(d)}{=} \operatorname{Tr}\Bigl\{ (\sigt + TT^\T)^{-1} (\hy - T\hx) \mathbb{E}_{P_M} \bigl[M M^\T\bigr] (\hy - T\hx)^\T \Bigr\} \\
    &\overset{(e)}{=} \operatorname{Tr}\Bigl\{ (\sigt + TT^\T)^{-1} (\hy - T\hx) \sigm (\hy - T\hx)^\T \Bigr\},
\end{align}
where in (a), we have simply expanded the Mahalanobis norm; in (b), we have used the fact that a scalar is equal to the trace of itself; in (c), we have relied on the property that the trace is invariant under cyclic permutations; in (d), we have moved the expectation inside, which is possible since all these operations are linear; and finally in (e), we have equated $\mathbb E[M M^\T]$ to $\sigm$.

Now that we have resolved the expectation, we can substitute this expression back into \eqref{eq_gaussdef_app}, and simplify the expression further:
\begin{align}
    \dfg{Y}{X} &= \inf_{T,\sigt\suff 0} \frac{1}{2}\biggl[\begin{aligned}[t]
			&{} \operatorname{Tr}\Bigl\{ (\sigt + TT^\T)^{-1} (\hy - T\hx) \sigm (\hy - T\hx)^\T \Bigr\} \\
			&+ \operatorname{Tr}\left\{(\sigt + TT^\T)^{-1}\right\}
			+ \log\det(\sigt + TT^\T) - d_Y
		\biggr]\end{aligned} \\
		&= \inf_{T,\sigt\suff 0} \frac{1}{2}\biggl[\begin{aligned}[t]
			&{} \operatorname{Tr}\Bigl\{ (\sigt + TT^\T)^{-1} \bigl(I + (\hy - T\hx) \sigm (\hy - T\hx)^\T \bigr) \Bigr\} \\
			&- d_Y + \log\det(\sigt + TT^\T)
		\biggr]\end{aligned} \label{eq_trace_dy_cancel}
\end{align}
From this, it might already be intuitively apparent that we would like
\begin{gather}
    (\sigt + TT^\T)^{-1} \bigl(I + (\hy - T\hx) \sigm (\hy - T\hx)^\T \bigr) = I \\
    \Rightarrow\quad \sigt = I - TT^\T + (\hy - T\hx) \sigm (\hy - T\hx)^\T, \label{eq_sigt_ideal}
\end{gather}
since this would allow the trace term to perfectly cancel $d_Y$ in \eqref{eq_trace_dy_cancel} (as the matrix within the trace has dimension $d_Y \times d_Y$).
However, it is possible to show that \eqref{eq_sigt_ideal} in fact gives a local optimum for $\sigt$.
This can be done by examining the partial derivative of the objective in \eqref{eq_trace_dy_cancel} with respect to $\sigt$, which is what we proceed to do next.%
\footnote{The matrix derivatives in this section were computed by hand (as shown in the body of the text), but they were also checked against an online symbolic matrix differentiation tool, \url{https://www.matrixcalculus.org/}~\cite{laue2018computing, laue2020simple}.}

First, we lay out the following identities for matrices $A$ and $B$ (refer \citep[Sec.~2,~2.4~and~9.7]{petersen2012matrix}):
\begin{equation} \label{eq_matrix_deriv_identities}
	\begin{alignedat}{3}
        \partial(AB) &= (\partial A)B + A(\partial B)
            &\qquad \partial \operatorname{Tr}\{A\} &= \operatorname{Tr}\{\partial A\}
			&\qquad \frac{\partial A}{\partial A_{ij}} &= J^{ij} \\
        \partial (A^{-1}) &= - A^{-1} (\partial A) A^{-1}
            &\qquad \partial (\log \det A) &= \operatorname{Tr}\{ A^{-1} \partial A \}
			&\qquad \Tr\{AJ^{ij}\} &= A^\T\vert_{ij} = A_{ji}
    \end{alignedat}
\end{equation}
where $J^{ij}$ is the single-entry matrix, which contains $0$'s everywhere except at the $(i, j)$-th location, where it contains a $1$.
These identities apply only when taking derivatives with respect to matrices that have no special structure (e.g., symmetry).
Since $\sigt$ is a symmetric matrix, we make use of some additional results~\citep[Sec.~2.8.2]{petersen2012matrix}.
For a symmetric matrix $S$ and some scalar function $f$ that depends on $S$,
\begin{equation} \label{eq_symm_matrix_deriv}
	\frac{\partial f}{\partial S}\bigg\vert_S = \frac{\partial' f}{\partial' S} + \frac{\partial' f}{\partial' S}^\T - \frac{\partial' f}{\partial' S} \odot I
\end{equation}
where $\partial'/\partial' S$ is used to denote that the partial derivative is taken \emph{as if} $S$ had no structure (so the identities in \eqref{eq_matrix_deriv_identities} may be applied), and $\odot$ represents the element-wise multiplication of matrices (also called a Hadamard product).

Using \eqref{eq_matrix_deriv_identities} and \eqref{eq_symm_matrix_deriv}, we can write out the partial derivative of the objective in \eqref{eq_trace_dy_cancel} with respect to $\sigt$, and equate it to zero.
To simplify notation, let $A \coloneqq \sigt + TT^\T$ and $B \coloneqq I + (\hy - T\hx) \sigm (\hy - T\hx)^\T$, so that the objective in \eqref{eq_trace_dy_cancel} can be written as:
\begin{equation}
	f(\sigt) \coloneqq \frac{1}{2} \Bigl[ \Tr\{A^{-1} B \} - d_Y + \log\det(A) \Bigr]
\end{equation}
We begin by taking a partial derivative of $f$ with respect to the $(i, j)$-th element of $\sigt$, which we denote $\sigt\vert_{ij}$:
\begin{align}
	\frac{\partial' f}{\partial' \sigt\vert_{ij}}\bigg\vert_{\sigt}
	&\overset{(a)}{=} \frac{1}{2} \biggl[ \frac{\partial'}{\partial' \sigt\vert_{ij}} \Tr\{A^{-1} B \} + \frac{\partial'}{\partial' \sigt\vert_{ij}} \log\det(A) \biggr] \bigg\vert_{\sigt} \\
	&\overset{(b)}{=} \frac{1}{2} \biggl[ \Tr\Bigl\{ \frac{\partial' A^{-1}}{\partial' \sigt\vert_{ij}} B \Bigr\} + \Tr\Bigl\{A^{-1}\frac{\partial' A}{\partial' \sigt\vert_{ij}}\Bigr\} \biggr] \bigg\vert_{\sigt} \\
	&\overset{(c)}{=} \frac{1}{2} \biggl[ \Tr\Bigl\{ - A^{-1} \frac{\partial' A}{\partial' \sigt\vert_{ij}} A^{-1} B \Bigr\} + \Tr\Bigl\{A^{-1}\frac{\partial' A}{\partial' \sigt\vert_{ij}}\Bigr\} \biggr] \bigg\vert_{\sigt} \\
	&\overset{(d)}{=} \frac{1}{2} \biggl[ \Tr\Bigl\{ - A^{-1} \frac{\partial' \sigt}{\partial' \sigt\vert_{ij}} A^{-1} B \Bigr\} + \Tr\Bigl\{A^{-1}\frac{\partial' \sigt}{\partial' \sigt\vert_{ij}}\Bigr\} \biggr] \bigg\vert_{\sigt} \\
	&\overset{(e)}{=} \frac{1}{2} \Bigl[ \Tr\bigl\{ - A^{-1} J^{ij} A^{-1} B \bigr\} + \Tr\bigl\{ A^{-1} J^{ij} \bigr\} \Bigr] \\
	&\overset{(f)}{=} \frac{1}{2} \Bigl[ \Tr\bigl\{ - A^{-1} B A^{-1} J^{ij} \bigr\} + \Tr\bigl\{ A^{-1} J^{ij} \bigr\} \Bigr] \\
	&\overset{(g)}{=} \frac{1}{2} \Bigl[ - \bigl( A^{-1} B A^{-1} \bigr)^\T_{ij} + \bigl( A^{-1} \bigr)^\T_{ij} \Bigr] \\
	&\overset{(h)}{=} \frac{1}{2} \bigl( - A^{-1} B A^{-1} + A^{-1} \bigr)_{ij} \\
	\Rightarrow \quad \frac{\partial' f}{\partial' \sigt}\bigg\vert_{\sigt}
	&\overset{(i)}{=} \frac{1}{2} \bigl( - A^{-1} B A^{-1} + A^{-1} \bigr),
\end{align}
where in the above labelled steps, we have:
\begin{enumerate}[label=(\alph*)]
	\item Taken the partial derivative of $f$ w.r.t.~$\sigt\vert_{ij}$ \emph{as if} $\sigt$ had no special structure, and evaluated it at $\sigt$
	\item Used the identities in \eqref{eq_matrix_deriv_identities}, noting that $B$ is constant w.r.t.~$\sigt$
	\item Used the identities in \eqref{eq_matrix_deriv_identities}
	\item Used the fact that $A = \sigt + TT^\T$, and that the partial derivative of $T$ with respect to $\sigt\vert_{ij}$ is zero
	\item Used the identities in \eqref{eq_matrix_deriv_identities}
	\item Made use of the fact that the trace is invariant under cyclic permutations of its argument
	\item Used the identities in \eqref{eq_matrix_deriv_identities}
	\item Relied on the fact that $A$ and $B$ are symmetric, so that $(A^{-1} B A^{-1})^\T = A^{-1} B A^{-1}$, and $A^{-\T} = A^{-1}$
	\item Assembled the full partial derivative with respect to the \emph{matrix} $\sigt$
\end{enumerate}
Hence, we can write out the complete partial derivative of the objective in \eqref{eq_trace_dy_cancel} with respect to $\sigt$, and set it to zero:
\begin{align}
	0 = \frac{\partial f}{\partial \sigt}\bigg\vert_{\sigt}
	&= \biggl[ \frac{\partial' f}{\partial' \sigt} + \frac{\partial' f}{\partial' \sigt}^\T - \frac{\partial' f}{\partial' \sigt} \odot I \biggr] \bigg\vert_{\sigt} \\
	&= \frac{1}{2} \Bigl[ \bigl( - A^{-1} B A^{-1} + A^{-1} \bigr) + \bigl( - A^{-1} B A^{-1} + A^{-1} \bigr)^\T - \bigl( - A^{-1} B A^{-1} + A^{-1} \bigr) \odot I \Bigr] \\
	&= \bigl( - A^{-1} B A^{-1} + A^{-1} \bigr) - \frac{1}{2} \bigl( - A^{-1} B A^{-1} + A^{-1} \bigr) \odot I,
\end{align}
where in the last step, we have again relied on the symmetry of $A$ and $B$.

A sufficient condition for the above equation to hold is
\begin{alignat}{2}
	& &\quad - A^{-1} B A^{-1} + A^{-1} &= 0 \\
	&\Rightarrow &\quad A^{-1} B A^{-1} &= A^{-1} \\
	&\Rightarrow &\quad B A^{-1} &= I \\
	&\Rightarrow &\quad B &= A
\end{alignat}
Substituting the original expressions for $A$ and $B$,
\begin{align}
    \sigt + TT^\T &= I + (\hy - T\hx) \sigm (\hy - T\hx)^\T \\
    \Rightarrow\qquad\qquad \sigt &= I - TT^\T + (\hy - T\hx) \sigm (\hy - T\hx)^\T
\end{align}
Finally, we can substitute this expression back into the objective in \eqref{eq_trace_dy_cancel} to cancel the $\operatorname{Tr}\{\cdot\}$ and $d_Y$ terms, leaving only the $\log\det(\cdot)$ term:
\begin{equation}
	\begin{gathered}
		\dfg{Y}{X} \; = \; \inf_{T} \; \log\det(\sigt + TT^\T) \\
        \textup{s.t.} \quad I - TT^\T + (\hy - T\hx) \sigm (\hy - T\hx)^\T \suff 0
    \end{gathered}
\end{equation}
Substituting for the $\sigt$ within the $\log\det(\cdot)$ yields the desired expression from the statement of the proposition.
\end{proof}

This concludes the first step of simplifying our optimization problem, which reduces the problem to an optimization over a single variable, and which is exact. Next, we proceed to approximate the terms in this simplified formulation.

First, note that the objective seeks to minimize the log-determinant of $I + (\hy - T\hx) \sigm (\hy - T\hx)^\T$. This term would naturally become smaller if $\hy$ is made closer to $T\hx$. To simplify this objective, we instead look at minimizing
\begin{equation}\label{eq_obj_orig}
    \mathbb{E}_{P_M}\Bigl[ \bigl\lVert (\hy - T\hx) M \bigr\rVert_{\sigt + TT^\T}^2 \Bigr],
\end{equation}
which appears in the original optimization problem. However, to avoid having the Mahalanobis norm being taken against $\sigt + TT^\T$, which depends on $T$ and yields a non-trivial objective function, we instead minimize:
\begin{equation}\label{eq_obj_relax}
    \mathbb{E}_{P_M}\Bigl[ \bigl\lVert (\hy - T\hx) M \bigr\rVert_{I + \hy\sigm\hy^\T}^2 \Bigr].
\end{equation}
This objective function is chosen such that \eqref{eq_obj_orig} and \eqref{eq_obj_relax} will be equivalent ``at the extremes,'' i.e. when the deficiency is zero ($H_Y-TH_X=0$) and when then deficiency is the mutual information ($T=0$). This choice also enables the proof of Proposition~\ref{prop_extremes} (see Appendix~\ref{app_prop_extremes}), which guarantees that when the true deficiency equals zero or the mutual information, the approximation yields the true deficiency. We use this revised objective to estimate the optimal \emph{argument} $\widehat{T}$. To compute the approximate Gaussian deficiency, we substitute this value of $\widehat{T}$ back into the formula for the Gaussian deficiency, given by \eqref{eq_gaussdef2}.

Finally, to arrive at the estimate of $\sigth$ we employ in our approximation, given in equation~\eqref{eq_sigth}, we assume that $\hy - T\hx$ is small in one part of the equation:
\begin{align}
    \sigt &= I - TT^\T + (\hy - T\hx) \sigm (\hy - T\hx)^\T \\
    &= I - TT^\T + \hy\sigm\hy^\T - \hy\sigm\hx^\T T^\T - T\hx\sigm\hy^\T + T\hx\sigm\hx^\T T^\T \\
    &\approx I - TT^\T + \hy\sigm\hy^\T - T\hx\sigm\hx^\T T^\T - T\hx\sigm\hx^\T T^\T + T\hx\sigm\hx^\T T^\T \label{eq_sigth_approx} \\
    &= I - TT^\T + \hy\sigm\hy^\T - T\hx\sigm\hx^\T T^\T \\
    &= I + \hy\sigm\hy^\T - T(I + \hx\sigm\hx^\T)T^\T
\end{align}
where the approximation occurs in \eqref{eq_sigth_approx} for the two terms that are linear in $T$. Since $\sigt \suff 0$ becomes a constraint in our eventual optimization problem, this approximation significantly simplifies the constraint: firstly, it removes the linear terms making this a pure quadratic form; secondly, the matrix appearing in the quadratic atom, $I + \hx\sigm\hx^\T$, is now guaranteed to be invertible. Prior to the approximation, the same matrix was $I-\hx\sigm\hx^\T$, which did not have the same guarantee. These simplifications enable using the Schur complement in formulating our constrained optimization problem, as detailed in Appendix \ref{app_cvx}.

\section{The Approximate Deficiency is Well Defined}\label{app_lemma_valid}
\begin{lemma}\label{lemma_valid}
For jointly Gaussian random vectors $M$, $X$ and $Y$ satisfying the assumption in Remark~\ref{rem_full_rank}, $\widehat{T}$ and $\sigth$ as defined by \eqref{eq_sigth} and \eqref{eq_th} yield an approximate deficiency $\dfh{Y}{X}$ that is well defined, i.e. the covariance matrix $\sigth + \widehat{T}\widehat{T}^\T$ for the composite channel $P_{Y'|X}\circ P_{X\mid M}$ is invertible.
\end{lemma}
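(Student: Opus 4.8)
The plan is to prove the \emph{stronger} statement that $\sigth + \widehat{T}\widehat{T}^\T$ is positive definite, which of course implies invertibility. Before the main argument I would first record that $\widehat{T}$ is itself well-defined: the feasible set of the program \eqref{eq_th} is nonempty (it contains $T=0$, for which $I+\hy\sigm\hy^\T \suff 0$ holds trivially) and bounded, since $I+\hx\sigm\hx^\T \suff I$ forces every feasible $T$ to obey $TT^\T \ffus T(I+\hx\sigm\hx^\T)T^\T \ffus I+\hy\sigm\hy^\T$, so that $\Tr(TT^\T)$ is controlled by $\Tr(I+\hy\sigm\hy^\T) < \infty$. As the feasible set is also closed and the objective of \eqref{eq_th} is continuous, the minimum is attained; hence $\widehat{T}$ satisfies the constraint of \eqref{eq_th}, and $\sigth$ as given by \eqref{eq_sigth} is exactly the left-hand side of that constraint evaluated at $\widehat{T}$, so that $\sigth \suff 0$. (This feasibility/attainment bookkeeping can alternatively be imported from the convex reformulation in Appendix~\ref{app_cvx}.)

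The core of the argument is then a one-line quadratic-form computation, carried out for an arbitrary nonzero $v \in \mathbb{R}^{d_Y}$. Testing the composite covariance against $v$ gives
\[
	v^\T(\sigth + \widehat{T}\widehat{T}^\T)v \;=\; v^\T \sigth v + \lVert \widehat{T}^\T v \rVert^2 \;\ge\; 0,
\]
since $\sigth \suff 0$ and the second summand is a squared norm. I would then rule out equality: if this quantity were $0$, both $\widehat{T}^\T v = 0$ and $v^\T\sigth v = 0$ would have to hold. But substituting $\widehat{T}^\T v = 0$ directly into \eqref{eq_sigth} yields
\[
	v^\T \sigth v \;=\; v^\T\bigl(I + \hy\sigm\hy^\T\bigr)v \;-\; v^\T\widehat{T}(I + \hx\sigm\hx^\T)\widehat{T}^\T v \;=\; v^\T\bigl(I + \hy\sigm\hy^\T\bigr)v \;\ge\; \lVert v \rVert^2 \;>\; 0,
\]
using $\hy\sigm\hy^\T \suff 0$, a contradiction. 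Hence $v^\T(\sigth + \widehat{T}\widehat{T}^\T)v > 0$ for every $v \ne 0$, so $\sigth + \widehat{T}\widehat{T}^\T$ is positive definite and in particular invertible.

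I do not expect a genuine obstacle: the whole content is that the feasibility constraint built into \eqref{eq_th} forces $\sigth \suff 0$, and that the ``$I$'' appearing in \eqref{eq_sigth} can only be cancelled by the $\widehat{T}(I+\hx\sigm\hx^\T)\widehat{T}^\T$ term, which vanishes on precisely the subspace $\{v : \widehat{T}^\T v = 0\}$ on which $\widehat{T}\widehat{T}^\T$ also vanishes. The one point that deserves a little care is the well-definedness of $\widehat{T}$ — i.e.\ that the minimum in \eqref{eq_th} is attained so that $\sigth$ makes sense and the constraint is genuinely satisfied at $\widehat{T}$ — which is why I would dispatch it first; everything after that is the short computation above.
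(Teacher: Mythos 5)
Your proof is correct and follows essentially the same route as the paper's: both arguments use the constraint in \eqref{eq_th} to get $\sigth \suff 0$ and then derive a contradiction from a vector $v$ annihilating the quadratic form, by showing $\widehat{T}^\T v$ would have to vanish while $v^\T\sigth v = v^\T(I+\hy\sigm\hy^\T)v > 0$ cannot. The only addition is your preliminary check that the minimum in \eqref{eq_th} is attained, which the paper leaves implicit; it is a reasonable piece of bookkeeping but not a different proof.
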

\begin{proof}
We know that $\sigth\suff 0$ by virtue of the constraint in \eqref{eq_th} and thus $\sigth+\widehat{T}\widehat{T}^\T\suff 0$ as well. For ease of notation, let $A \coloneqq \hy\sigm\hy^\T$ and $B \coloneqq \hx\sigm\hx^\T$. Assume for a contradiction that $\sigth+\widehat{T}\widehat{T}^\T$ is rank deficient. This implies that there exists a vector $v\ne \mathbf{0}$ such that:
\begin{alignat*}{2}
    & & v^\T(\sigth+\widehat{T}\widehat{T}^\T)v &= v^\T(I+A-\widehat{T}B\widehat{T}^\T)v = 0 \\
    \Rightarrow& \quad & v^\T(I+A)v &= v^\T(\widehat{T}B\widehat{T}^\T)v \\
    \overset{(a)}{\Rightarrow}& & (\widehat{T}^\T v)^\T B(\widehat{T}^\T v) &> 0 \\
    \Rightarrow& & \widehat{T}^\T v &\ne \mathbf{0}
\end{alignat*}
where $(a)$ follows from $A=(\hy\sigm^{-\frac{1}{2}})(\hy\sigm^{-\frac{1}{2}})^\T\Rightarrow A\suff 0$. But since $\sigth\suff 0$, we have:
\begin{alignat*}{2}
    & & v^\T\sigth v &\ge 0 \\
    \Rightarrow& \quad & v^\T(\sigth+\widehat{T}\widehat{T}^\T) v - v^\T(\widehat{T}\widehat{T}^\T) v &\ge 0 \\
    \Rightarrow& & v^\T\widehat{T}\widehat{T}^\T v &= \Vert \widehat{T}^\T v \Vert_2^2 \le 0.
\end{alignat*}
which is a contradiction since $\widehat{T}^\T v \ne \mathbf{0}\Rightarrow \Vert \widehat{T}^\T v \Vert_2^2 > 0$.
\end{proof}

\section{Proof of Proposition \ref{prop_extremes}}\label{app_prop_extremes}

\begin{proof}[Proof of Proposition~\ref{prop_extremes}]
For ease of notation, we omit the function arguments $(M:Y\setminus X)$ and refer simply to $\widehat{\delta}_G$, $\delta_G$, and $\delta$.

\vspace{6pt}
($\delta = 0 \Rightarrow \widehat{\delta}_G = 0$)
It was shown by Torgersen~\cite[Theorem~8.2.13]{torgersen1991comparison} that $\delta=0 \Rightarrow \delta_G=0$, thus it suffices to show that $\delta_G=0 \Rightarrow \widehat{\delta}_G=0$.

When $\delta_G$ is well-defined, we know that $\sigt+TT^\T\suff 0$ is invertible and thus $\sigt+TT^\T,(\sigt+TT^\T)^{-1}\succ 0$. This implies that the first term of \eqref{eq_gaussdef2} is greater than zero unless $(T\hx-\hy)M=0$. Thus $\delta_G=0$ implies that $(T\hx-\hy)M=0$ almost everywhere in $M$, i.e. that there exists a $T^*$ such that $T^*\hx=\hy$. Thus, we also know that the objective function in \eqref{eq_th} is minimized by $T=T^*$. To show that $T^*$ is in the feasible set, we note that $\delta_G=0$ implies that there exists a $\sigt^*\suff 0$ such that $\sigt^*+T^*{T^*}^\T=I\Rightarrow I-T^*{T^*}^\T \suff 0$. As such, when $T=T^*$ we have that $\sigt=I+\hy\sigm\hy^\T - T^*(I+\hx\sigm\hx^\T){T^*}^\T=I-T^*{T^*}^\T=\sigt^*$.

($\delta=0\Leftarrow\widehat{\delta}_G=0$) This follows from $\widehat{\delta}_G\ge\delta\ge 0$.

($\delta=I(M;Y)\Rightarrow\widehat{\delta}_G=I(M;Y)$) Note that if $\widehat{T}=\mathbf{0}$, we have
\begin{equation}
    \begin{array}{rc}
        & \sigth = I + \hy\sigm\hy^\T \\
        \Rightarrow & \widehat{P}_{Y'|X}\circ P_{X\mid M} = P_Y \\
        \Rightarrow & \widehat{\delta}_G = I(M;Y).
    \end{array}
\end{equation}
As such, it suffices to show that $\delta=I(M;Y)$ implies that the objective function in \eqref{eq_th} is minimized at $T=\mathbf{0}$. Suppose for a contradiction that there exists a $\widetilde{T}$ such that:
\begin{equation}
    \mathbb{E}_{P_M}\left[\Vert(\widetilde{T}\hx-\hy)M\Vert_{A}^2\right] < \mathbb{E}_{P_M}\left[\Vert\hy M\Vert_{A}^2\right] \label{eq_tildet}
\end{equation}
with $A\coloneqq I + \hy \sigm \hy^\T$ giving the marginal covariance matrix for $Y$ (i.e. $P_Y=\mathcal{N}(\mathbf{0},A)$). This implies that
\begin{equation}
    \mathbb{E}_{P_M}\left[\Vert(\lambda\widetilde{T}\hx-\hy)M\Vert_{A}^2\right] < \mathbb{E}_{P_M}\left[\Vert\hy M\Vert_{A}^2\right]
\end{equation}
for any $\lambda\in(0,1)$.  The implication follows from the convexity of the Mahalanobis distance, noting that $\lambda\widetilde{T}=\lambda\widetilde{T}+(1-\lambda)\mathbf{0}$. In other words, we can keep making $\widetilde{T}$ ``smaller'' (by multiplying it by smaller values of $\lambda$), until $\widetilde{\Sigma}_T\coloneqq A-\widetilde{T}\widetilde{T}^\T$ is PSD, while still making sure that \eqref{eq_tildet} holds. Defining $\widetilde{P}_{Y'|X}\coloneqq \mathcal{N}(\widetilde{T}X,\widetilde{\Sigma}_T)$, we note that $\widetilde{P}_{Y'|X}\circ P_{X\mid M}= \mathcal{N}(\widetilde{T}\hx M,A)$. As such:
\begin{align*}
    \MoveEqLeft[1] \mathbb{E}_{P_M}\left[D(P_{Y\mid M} \Vert \widetilde{P}_{Y'|X}\circ P_{X\mid M})\right] - I(M;Y) \\
    &= \mathbb{E}_{P_M}\left[D(P_{Y\mid M} \Vert \widetilde{P}_{Y'|X}\circ P_{X\mid M}) - D(P_{Y\mid M} \Vert P_Y)\right] \\
    &= \mathbb{E}_{P_M}\left[\Vert(\widetilde{T}\hx-\hy)M\Vert_{A}^2\right] - \mathbb{E}_{P_M}\left[\Vert\hy M\Vert_{A}^2\right]<0.
\end{align*}
But since $\widetilde{P}_{Y'|X}\in\mathcal{C}_G(\sampy\mid\sampx)$, we have:
\begin{equation}
    \hspace{-1.5mm}\delta_G \le \mathbb{E}_{P_M}\left[D(P_{Y\mid M} \Vert \widetilde{P}_{Y'|X}\circ P_{X\mid M})\right] < I(M;Y) \le \delta,
\end{equation}
which is a contradiction.
\end{proof}

\section{Implementation Details}
\label{app_cvx}

We briefly discuss how to reformulate the optimization problem in \eqref{eq_th} such that it satisfies the disciplined convex programming (DCP) rules and can be solved using the CVXPY software package \cite{diamond2016cvxpy,agrawal2018rewriting}. First we note that the objective function can be rewritten using the trace trick. Letting $A=I + \hy \sigm \hy^\T$ for ease of notation:
\begin{align*}
    \MoveEqLeft[1] \mathbb{E}_{P_M}\left[\Vert(T\hx-\hy)M\Vert_{A}^2\right]\\
    &= \mathbb{E}_{P_M}\left[ M^\T(T\hx-\hy)^\T A^{-1}(T\hx-\hy)M \right]\\
    &= \mathbb{E}_{P_M}\left[\operatorname{Tr}\left( M^\T(T\hx-\hy)^\T A^{-1}(T\hx-\hy)M\right) \right]\\
    &= \mathbb{E}_{P_M}\left[\operatorname{Tr}\left( MM^\T(T\hx-\hy)^\T A^{-1}(T\hx-\hy)\right) \right]\\
    &= \operatorname{Tr}\left( \sigm(T\hx-\hy)^\T A^{-1}(T\hx-\hy)\right)\\
    &= \Vert A^{-\frac{1}{2}}T\hx\sigm^{\frac{1}{2}}-A^{-\frac{1}{2}}\hy\sigm^{\frac{1}{2}} \Vert_F^2
\end{align*}
where $\Vert\cdot\Vert_F^2$ is the squared Frobenius norm. Next, the constraint can be rewritten using the Schur complement\ifshort{}{ \cite{zhang2006schur}}:
\begin{gather*}
    I + \hy\sigm\hy^\T - T(I+\hx\sigm\hx^\T)T^\T \suff 0 \\
    \Updownarrow \\
    \begin{bmatrix}
    I + \hy\sigm\hy^\T & T \\
    T^\T & (I+\hx\sigm\hx^\T)^{-1}
    \end{bmatrix}
    \suff 0
\end{gather*}
Combining the two, we obtain the problem in a form that can solved directly by CVX:
\begin{align}\label{eq_cvx_prob}
\begin{split}
\widehat{T} = \
\underset{T}{\operatorname{argmin}}\quad & \Vert A^{-\frac{1}{2}}T\hx\sigm^{\frac{1}{2}}-A^{-\frac{1}{2}}\hy\sigm^{\frac{1}{2}} \Vert_F^2 \\
\text{s.t.} \quad \quad & \begin{bmatrix}
    I + \hy\sigm\hy^\T & T \\
    T^\T & (I+\hx\sigm\hx^\T)^{-1}
    \end{bmatrix}
    \suff 0
\end{split}
\end{align}
We solved the problem used the splitting conic solver (SCS) \cite{ocpb:16,scs} with relaxation parameter 1, maximum iteration number 5000, and convergence tolerance of 1e-10. The maximum iteration number and convergence tolerance were chosen to be conservatively large and small, respectively, as we found lower and higher values of these parameters to yield inferior results (i.e. larger estimated deficiencies) and computational time was not an issue. Due to issues of numerical imprecision, the estimated $\widehat\Sigma_T$ was occasionally found to have very small negative eigenvalues (i.e. greater than -1e-6). To ensure that the estimated $\widehat\Sigma_T$ was positive semidefinite, we replaced all negative eigenvalues with zero.

\section{Step-By-Step Experimental Procedure }\label{app_exp}

\noindent Here, we describe the procedure used to generate the plots in Figure~\ref{fig_simplex}.
\begin{enumerate}
    \item Sample $d_M$, $d_X$, $d_Y$ according one of (S1)--(S4).
    \item Sample a covariance matrix $\Sigma\in \mathbb{R}^{d\times d}$ (where $d=d_M+d_X+d_Y$) from a standard Wishart distribution.
    \item Compute $I(M;(X,Y))$, $I(M;X)$, and $I(M;Y)$.
    \item Compute conditional mean and covariances and whiten to obtain $\hx$, $\hy$, and $\sigx=\sigy=I$.
    \item Estimate the deficiencies $\dfh{Y}{X}$ and $\dfh{X}{Y}$, using the the Python CVX package to solve \eqref{eq_cvx_prob}.
    \item Compute the approximate $\delta$-PID atoms.
    \item Check if $\widehat{RI}$ and $\widehat{SI}$ are non-negative.
    \item If $d_M=1$, check if either $\widehat{UI}_X\approx 0$ or $\widehat{UI}_Y\approx 0$.
\end{enumerate}

\section{Details of the Simulation Comparing the $\sim$-PID and the $\widehat\delta_G$-PID}
\label{app_poiss_vs_gauss}

\subsection{Simulation setup}

The results shown in Figure~\ref{fig_poiss_vs_gauss} are for a \emph{multivariate Poisson} distribution, which is defined below.
We used a small number of dimensions ($d_M = 2$ and $d_X = d_Y = 1$) in order to allow for computational tractability of the discrete $\sim$-PID estimator of Banerjee et al.~\cite{banerjee2018computing}.

We took $M_i \sim $ i.i.d.~Poisson$(\lambda = 2)$, for both components of $M$ (i.e, $i = 1, 2$).
We took $X = X_1 + X_2 + N_X$, where $X_i \sim $ Binomial$(M_i, w^X_i)$ for some ``weight'' $w^X_i$, and $N_X \sim$ Poisson$(\lambda_X = 1)$.
$Y$ was defined in the same way as $X$, with $Y_i$, $N_Y$, $X_i$ and $N_X$ all being independent of each other.
The results shown in Figure~\ref{fig_poiss_vs_gauss} are for $w^Y_1 = w^Y_2 = w^X_2 = 0.5$, while $w^X_1$ is varied between 0.0 and 1.0 in increments of 0.1 on the x-axis.

The discrete $\sim$-PID was estimated by truncating the Poisson distributions, while ensuring >95\%
For computing the $\widehat\delta_G$-PID, the Poisson distribution was approximated as a Gaussian distribution by considering the sample covariance matrix, estimated from $N = 10^6$ data points.
This was observed to produce a stable covariance estimate, following which channel parameters such as $\hx$, $\sigx$, etc.\ were extracted from the estimated covariance matrix.
The convex optimization problem in the definition of $\widehat\delta_G$-PID was solved using CVXPY (see Appendix~\ref{app_cvx} for details).

\subsection{Results and Limitations}

\noindent It should be noted that this experiment has several intrinsic limitations:
\begin{enumerate}
	\item The Poisson distributions had to be truncated for reasons of finiteness and computational tractability, before estimating the $\sim$-PID.
		As mentioned above, we endeavoured to ensure that the distribution was captured well, but the truncation could nevertheless skew the discrete-PID estimates.
	\item The estimation of the sample covariance matrix used to compute the $\widehat\delta_G$-PID could also contribute to error.
	\item Lastly and most importantly, the two PIDs being considered are different, i.e., the $\sim$-PID for the Poisson distribution, and the $\delta$-PID for the Gaussian approximation.
		These PIDs likely produce slightly different values: the $\delta$-PID is closer in spirit to the PID of Harder et al.~\cite{harder2013bivariate} (see~\cite{banerjee2018unique} for an explanation), which was shown to have different values from the $\sim$-PID by Bertschinger et al.~\cite{bertschinger2014quantifying}.
\end{enumerate}
These limitations suggest that we have no way of truly knowing how accurate either of the two estimates are.
However, their degree of agreement despite the use of completely different estimation techniques cannot be purely coincidental, and argues in favor of the practical applicability of our method.
In particular, these results provide evidence suggesting that:
\begin{enumerate}
	\item Our approximate PID is able to differentiate between cases of zero- and non-zero unique information.
	\item Our proposed method matches expected trends for every one of the four PID quantities.
	\item As a fraction of total mutual information, each PID value is reasonably close to its expected value, suggesting our method can be used to assess relative differences between PID quantities in different settings.
\end{enumerate}

\clearpage
\section{3-Dimensional Simplex Views}
\label{app_3d_views}

\begin{figure}[h!]
    \centering
    \includegraphics[height=0.9\textheight]{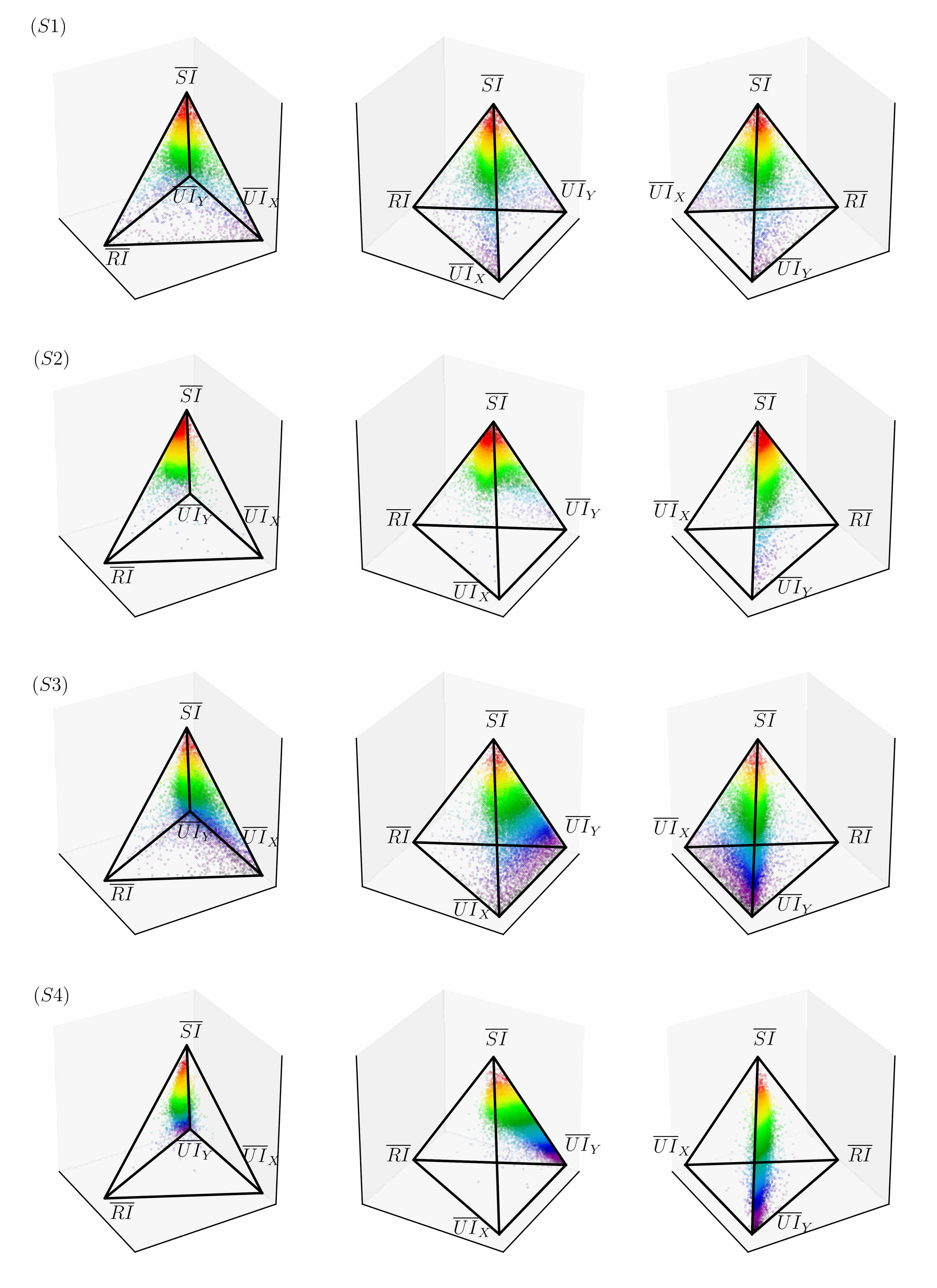}
    \caption{Alternative views of Figure \ref{fig_simplex}. Each row displays one of the four sampling schemes, and each column provides a rotated view of the simplex.}
\end{figure}

\end{document}